    \documentclass[final,floating,letterpaper]{siamltex1213}

    \usepackage{url}
    \usepackage{verbatim}
    \usepackage[titletoc]{appendix}
    \usepackage{graphicx}
    \usepackage{xcolor}
    \usepackage{hyperref}
    \usepackage{bm}
	\usepackage{amsmath}
	%\allowdisplaybreaks[4]
	\usepackage{amsfonts}
	\usepackage{graphicx}
    \usepackage{nicefrac}
    \usepackage{longtable}
    \usepackage{color}
    \usepackage{graphicx, amssymb,graphics}
    \usepackage{algorithmicx,algorithm}
    \usepackage{paralist}
    \usepackage{float}

    \usepackage{epstopdf}
\makeatletter % 重点代码区开始 <=======================================================
\let\myorg@bibitem\bibitem
\def\bibitem#1#2\par{%
	\@ifundefined{bibitem@#1}{%
		\myorg@bibitem{#1}#2\par
	}{%
		\begingroup
		\color{\csname bibitem@#1\endcsname}%
		\myorg@bibitem{#1}#2\par
		\endgroup
	}%
}

\newtheorem{rem}{Remark}[section]
\newtheorem{thm}{Theorem}[section]

\newcommand{\reff}[1]{{\rm (\ref{#1})}}
	\newcommand\be {\begin{equation}}
	\newcommand\ee {\end{equation}}

\newcommand{\R}{\mathbb{R}}            % real numbers
            % integers
            % boldface x
            % boldface y
            % boldface z
            % boldface n
            % boldface r
            % boldface V
\newcommand{\bD}{\bm{D}}
\newcommand{\bE}{\bm{E}}
            % boldface F
\newcommand{\ve}{\varepsilon}          % varepsilon

\numberwithin{equation}{section}

\graphicspath{{./Figs/}}

\title{A Maxwell-Amp\`{e}re Nernst-Planck Framework for Modeling Charge Dynamics
%An Amp\`{e}re-Nernst-Planck Model for Dynamics of Charges
}

\date{\today}

\begin{document}

	\author{
Zhonghua Qiao\thanks{Department of Applied Mathematics and Research Institute for Smart Energy, The Hong Kong Polytechnic University, Hung Hom, Hong Kong. (zhonghua.qiao@polyu.edu.hk)},~
\and
Zhenli Xu\thanks{School of Mathematical Sciences, MOE-LSC, and CMA-Shanghai, Shanghai Jiao Tong University, Shanghai 200240, China. (xuzl@sjtu.edu.cn)},~
\and
Qian Yin\thanks{School of Mathematical Sciences, MOE-LSC, and CMA-Shanghai, Shanghai Jiao Tong University, Shanghai 200240, China. (sjtu\_yinq@sjtu.edu.cn)},~
\and
Shenggao Zhou\thanks{School of Mathematical Sciences, MOE-LSC, and CMA-Shanghai, Shanghai Jiao Tong University, Shanghai 200240, China. (sgzhou@sjtu.edu.cn)}.
}

 \maketitle

\begin{abstract}	
Understanding the properties of charge dynamics is crucial to many practical applications, such as electrochemical energy devices and transmembrane ion channels. This work proposes a Maxwell-Amp\`{e}re Nernst-Planck (MANP) framework for the description of charge dynamics. The MANP model with a curl-free condition on the electric displacement is shown to be energy dissipative with respect to a convex free-energy functional, and demonstrated to be equivalent to the Poisson-Nernst-Planck model. By the energy dissipation law, the steady state of the MANP model reproduces the charge conserving Poisson--Boltzmann (PB) theory, providing an alternative energy stable approach to study the PB theory. {  In order to achieve the curl-free condition, a companion local curl-free relaxation algorithm, which is shown to naturally preserve the discrete Gauss's law and converge robustly with linear computational complexity, is developed for the MANP model.} One of the main advantages of our development is that it can efficiently deal with space-dependent permittivity instead of solving the variable-coefficient Poisson's equation. Many-body effects such as ionic steric effects and Coulomb correlations can be incorporated within the MANP framework to derive modified MANP models for problems in which the mean-field approximation fails. Numerical results on the charge dynamics with such beyond mean-field effects in inhomogeneous dielectric environments are presented to demonstrate the performance of the MANP models in the description of charge dynamics, illustrating that the proposed MANP model provides a general framework for modeling charge dynamics.

% the electric displacement obtained from the Amp\`{e}re equation is updated with a local relaxation procedure,

\bigskip

\noindent
{\bf Key words and phrases}:
Maxwell-Amp\`{e}re Nernst-Planck equations; Beyond mean field; Energy dissipation; Local curl-free relaxation; Linear complexity

\noindent
{\bf AMS subject classification}: \, 35K55,  82C21
%35K55  	Nonlinear parabolic equations
%65M06  	Finite difference methods for initial value and initial-boundary value problems involving PDEs
%82C21  	Dynamic continuum models (systems of particles, etc.) in time-dependent statistical mechanics
\end{abstract}

\section{Introduction}
%Problem introduction:
%Literature review:
%Advantages and Contributions and Further improvements:

Charge dynamics plays a fundamental role in various real-world applications, e.g., electrochemical devices~\cite{BTA:PRE:04}, semiconductors~\cite{2012semiconductor}, microfluidics~\cite{Schoch:RMP:08}, and biological ion channels~\cite{E:CP:98}. At the continuum level, charge dynamics is often explored by the well-known Poisson--Nernst--Planck (PNP) theory. Based on the mean-field approximation, the classical PNP theory treats ions as point charges, takes the screening effect of background medium into account as a continuum of certain dielectric coefficients, and neglects direct ion-ion interaction details, with ions interacting only through the mean electrostatic potential. With such approximations, the Nernst-Planck equations are used to describe the diffusion and convection of ions under the gradient of the electric potential which in turn is governed by the Poisson's equation with charge sources arising from mobile ions and permanent charges in the system.

Despite its great success in various applications, the PNP theory fails in cases, e.g., concentrated ionic solutions of high valences, where the mean-field approximations break down~\cite{BKS+:ACIS:2009, LJX:SIAP:2018}. To address these issues, ionic steric effects and Coulomb correlations are often incorporated in the framework of PNP theory. For instance, ionic steric effects can be included based on the lattice gas theory, by considering the entropy of solvent~\cite{BAO:PRL:1997, KBA:PRE:2007, Li_Nonlinearity09, Li_SIMA09, LZ:BJ:2011}. Hard-sphere interactions between ions can be modeled by the Lennard-Jones potential energy, which gives rise to a nonlocal modified PNP model~\cite{HyonLiuBob_CMS10}. To get a computationally tractable model, the nonlocal Lennard-Jones integrals are further approximated to obtain a type of local modified PNP models~\cite{HorngLinLiuBob_JPCB12, LinBob_CMS14}. The ionic steric effects can also be described by the fundamental measure density functional theory~\cite{JZWu_JPCM14, Roth:JPC:2002,Wu:JCP:2002}, which gives rise to a complicated model with modified nonlocal PNP equations. To account for Coulomb correlation effects, the self energy of ions governed by the generalized Debye--H\"{u}ckel equation is taken into account in a modified PNP theory~\cite{LJX:SIAP:2018}. Such a theory further with ionic steric effects described by the fundamental measure density functional theory is able to accurately reproduce molecular simulation results on ionic distributions next to charged surfaces with dielectric mismatches~\cite{LJX:SIAP:2018, MXZ:SIAP:2021}.

It is of interest to observe that the electric potential appears in the PNP equations in the form of its gradient, i.e., the electric field. This inspires us to use ionic concentrations and the electric field, rather than the electric potential, to describe the charge dynamics. {  The replacement of the Poisson's equation by the Maxwell-Amp\`{e}re equation and further coupling with the Nernst-Planck equation to describe charge dynamics have been first proposed in a series of works~\cite{Eisenberg2020, HorngBiophys} by Eisenberg and his collaborators.}
Also, it has been shown that the formulation based on concentrations and the electric field has several advantages~\cite{BSD:PRE:2009, RM:PRL:2004, ZWL:PRE:2011}. For instance, the free-energy functional of a charged system is convex with respect to ionic concentrations~\cite{ MaggsPB_EPL12}. The convexity is a valuable feature in the design of numerical methods for finding the steady state of a charged system via minimization~\cite{ BSD:PRE:2009, MaggsPB_EPL12,PujosMaggs_JCTC15}. In this work, we propose a Maxwell-Amp\`{e}re Nernst-Planck (MANP) model in which the ionic concentrations and electric displacement, i.e., dielectric coefficient times the electric field, are the unknown variables. %The equivalence of PNP and MANP models is established as well.
In order to overcome the drawbacks due to the mean-field approximations, ionic steric effects and Coulomb correlations are taken into account  in the framework of the MANP model. We present modified MANP models with excess energies to account for steric effects, e.g.,  described by lattice-gas theories~\cite{BAO:PRL:1997}, and Coulomb correlations modeled by dielectric self energy of ions and ion-ion correlations~\cite{LJX:SIAP:2018}.

% It is also derived that the MANP model corresponds to the gradient flow of a convex free-energy functional of ionic concentrations and electric displacement.
It is also derived that the MANP model {  is energy dissipative with respect to} a convex free-energy functional. By the energy dissipation law, the steady state of the MANP model reproduces the charge conserving PB theory~\cite{Lee_JMP2014,Lee_NonL2010, Wan_PRX2014}. Therefore, our MANP model provides an alternative energy stable approach to study the charge conserving PB theory analytically and numerically.  The electric displacement obtained from the discretized Maxwell-Amp\`{e}re equation is updated with a local relaxation procedure to achieve the curl-free property. It is shown that the local curl-free relaxation algorithm is guaranteed to converge and is of linear computational complexity. Such a local relaxation idea was first proposed in~\cite{M:JCP:2002, MR:PRL:2002} to devise a local molecular simulation algorithm for the long-range Coulomb interactions. It is later coupled with the Maxwell equations to obtain the Maxwell equation molecular dynamics (MEMD) for dynamics simulations~\cite{PD:JPCM:2004,RM:JCP:2004MC, RM:PRL:2004}. The algorithm has also been extended to deal with inhomogeneous dielectrics~\cite{FH:PRE:2014,FXH:JCP:14} and solve the PB equations~\cite{BSD:PRE:2009} as well as the PB equations with steric effects~\cite{ZWL:PRE:2011}.

The advantage of the MANP formulation is that it is a local formulation and avoids solving the Poisson's equation for which the dielectric coefficient could be inhomogeneous in cases of practical interests. In each time step, the electric displacement can be explicitly updated by the discretized Maxwell-Amp\`{e}re equation with the updated ionic concentrations. Explicit formula for each local relaxation step is available in the local curl-free relaxation algorithm. Also, numerical tests demonstrate that the number of relaxation steps in the local relaxation algorithm is small and bounded. Therefore, the  update algorithm for the electric displacement in each time step is of linear computational complexity. As aforementioned, the MANP model can efficiently deal with space-dependent permittivity instead of solving the variable-coefficient Poisson's equation. In addition, due to the local nature of the algorithm for the electric displacement, the MANP model and corresponding numerical methods can be further extended to the efficient coarse-grained simulations of charge dynamics with fluctuations~\cite{TorreEspanol_JCP2011}.  It also has great potential to apply the MANP model to other areas such as the particle-in-cell simulation of plasma~\cite{chen2011JCP:VA, Degond_JCP2010}.  For example, a Vlasov--Amp\`{e}re model, rather an earlier Vlasov--Poisson model, has been applied to particle-in-cell simulations of plasma~\cite{chen2011JCP:VA}. In contrast to the one-dimension simulations performed in the work~\cite{chen2011JCP:VA}, our MANP model with the local curl-free relaxation algorithm is promising to deal with high dimensional cases.

The rest of this paper is organized as follows. In Sec.~\ref{s:MANP}, we derive the MANP model. In Sec.~\ref{s:prop}, we present some properties and advantages of the MANP model. Sec.~\ref{s:Results} presents numerical methods for the MANP model and some results.  The paper ends with some conclusions in Sec.~\ref{s:Con}.

\section{Model}\label{s:MANP}
This section begins with an overview of the PNP equations, then advances to a reformulated
model, namely, the MANP equations, and finally shows the equivalence of the two models.
In spite of the mathematical equivalence, we remark that the MANP framework has its advantages in the incorporation of
many-body effects and the design of numerical schemes.

\subsection{Poisson-Nernst-Planck equations}

Consider an electrolyte solution of $M$ ionic species occupying a bounded and connected domain $\Omega$  with periodic boundary conditions.
Let $c^\ell(\bm{r},t)$ be the ionic concentration for the $\ell$-th species ($\ell=1,\dots, M$) at time $t$, and $\phi(\bm{r},t)$ be the electric potential. The electrostatic free energy of the system is given by
\begin{equation}
	%\label{E}
	F[c^1, \dots, c^M]=\int_\Omega \left[ \frac{\ve_0 \ve_{r} |\nabla\phi|^2}{2} + k_B T \sum_{\ell=1}^{M} c^\ell \left( \log (\Lambda^3 c^\ell ) -1 \right) \right] d\bm{r},
\end{equation}
where $\ve_0$ is the vacuum permittivity, $\ve_{r}$ is the relative permittivity that can be spatially dependent, $k_B T$ is the thermal energy, and $\Lambda$ is the thermal de Broglie wavelength. The electrostatic free energy $F$ consists of the electric field energy and entropic contribution of ions. Here, the electric potential $\phi$ is determined by the ionic concentrations through the Poisson's equation
\begin{equation}\label{poisson}
-\nabla \cdot \ve_0 \ve_r \nabla \phi =\rho,
\end{equation}
where $\rho=\sum_{\ell=1}^M z^{\ell} e c^{\ell}+\rho^f$ is the total charge density, $z^{\ell}$ is the valence, and $\rho^f(\bm{r}): \Omega \to \R$ is a given function representing the distribution of fixed charges.
Ion transport in the system is described by the conservation law
\begin{equation}\label{np}
	\frac{\partial c^{\ell}}{\partial t}=-\nabla \cdot \bm{J}^{\ell},
\end{equation}
%for $c^\ell=c^\ell(\bm{r},t)$ with $t$ denoting the time variable.
where the ionic flux $\bm{J}^{\ell}$ is given by
$$\bm{J}^{\ell}=-\frac{\gamma^{\ell} c^{\ell}}{k_B T} \nabla \mu^{\ell},$$
and $\gamma^{\ell}>0$ is the diffusion coefficient. The chemical potential is given by the variation of the free energy,
$\mu^{\ell}=\delta F/\delta c^\ell$, which is,
\begin{equation}
	\mu^{\ell}= z^{\ell}  e \phi+ k_B T\log (\Lambda^3 c^\ell ).
	%\label{cp}
\end{equation}
It follows from periodic boundary conditions that the total ionic mass of each species conserves in the domain:
\[
\frac{d}{dt} \int_\Omega c^\ell d\bm{r} = 0.
\]
Therefore, we have $\int_\Omega c^\ell (\bm{r}, t) d\bm{r} = \int_\Omega c^\ell (\bm{r}, 0) d\bm{r} =N^{\ell}$, where $N^{\ell}$ is the total ionic mass of $\ell$-th ionic species.  Coupling \eqref{np} with the Poisson's equation \reff{poisson} leads to the widely known Poisson--Nernst--Planck equations
\begin{equation}
%\label{PNP}
\left\{
\begin{aligned}
%\label{pbe1}
& \frac{\partial c^{\ell}}{\partial t}=\nabla \cdot \gamma^\ell\left(  \nabla c^\ell+\frac{z^\ell e c^\ell}{k_B T} \nabla \phi\right), ~\ell =1, \cdots, M, \\
%\label{pbe2}
& -\nabla \cdot \ve_0 \ve_r \nabla \phi =\rho.
\end{aligned}
\right.
\end{equation}

We introduce characteristic concentration $c_0$, diffusion constant $\gamma_0$, and lengths $L$ and $l_c=\left[ \ve_0 k_B T/(e^2 c_0)  \right]^{1/2}$.   Define $\tilde{\bm{r}} = \bm{r}/L$, $\tilde{t} = t\gamma_0 / (l_c L)$, $\tilde{\nabla}= L \nabla$, $\tilde{\rho}^f= \rho^f/(ec_0)$, $\tilde{\gamma} = l_c \gamma^l/ (L\gamma_0)$, $\ve= \ve_r l_c^2/ L^2$, $\tilde{\Lambda}^3 = \Lambda^3 c_0$ $\tilde{c}^\ell= c^\ell/c_0$, $\tilde{\phi}=e \phi /(k_B T )$, $\tilde{N}^{\ell} = N^{\ell}/(L^3 c_0)$, and $\tilde{F}=F/(k_B T c_0 L^3)$. After rescaling, we have the nondimensionalized PNP equations
\begin{equation}
	\label{PNP}
	\left\{
	\begin{aligned}
		%\label{pbe1}
		& \frac{\partial \tilde{c}^{\ell}}{\partial \tilde{t} }=\tilde{\nabla} \cdot \tilde{\gamma}^\ell\left(\tilde{\nabla}  \tilde{c}^\ell+z^\ell  \tilde{c}^\ell \tilde{\nabla} \tilde{\phi}\right), ~\ell =1, \cdots, M, \\
		%\label{pbe2}
		& -\tilde{\nabla} \cdot \ve \tilde{\nabla} \tilde{\phi} =\sum_\ell^M z^\ell \tilde{c}^\ell+ \tilde{\rho}^f,
	\end{aligned}\right.	
\end{equation}
and the electrostatic free energy

\begin{equation}\label{E}
		\tilde{F}[\tilde{c}^1, \dots, \tilde{c}^M]=\int_{\tilde{\Omega}} \left[ \frac{\varepsilon | \tilde{\nabla} \tilde{\phi}|^2}{2} +  \sum_{\ell=1}^{M} \tilde{c}^\ell \log \tilde{c}^\ell  \right] d\tilde{\bm{r}} + (\log\tilde{\Lambda}^3 -1) \sum_{\ell=1}^{M} \tilde{N}^{\ell},
\end{equation}
with the second equation in~\reff{PNP} as the constraint. The last term in~\reff{E} is a constant that will be ignored for simplicity. Also, we will drop all the tildes
and use the dimensionless equations for our discussion in the rest of the paper.
%The rescaled chemical potential is
%\begin{equation}
%	\label{cp}
%	\tilde{\mu}^\ell=z^\ell \tilde{\phi} + \log \tilde{c}^\ell.
%\end{equation}
%For simplicity of presentation, we will drop all the tildes in the rest of the paper.

\subsection{Maxwell--Amp\`{e}re Nernst--Planck equations}
In the PNP equations, charge dynamics of the system is described by the ionic concentrations and electric potential. Based on an observation that the electric potential appears in the PNP equations in the form of its gradient, we can alternatively work on ionic concentrations and the electric displacement
$\bD=\ve \bE$ instead, where $\bE=-\nabla \phi$ is the electric field. The charge continuity equation reads,
\begin{equation}
	\frac{\partial \rho}{\partial t}+\nabla\cdot \bm{j} =0,
\end{equation}
where the current density of charge is,
\begin{equation}
\bm{j}=-\sum_{\ell=1}^{M} z^\ell \gamma^\ell c^\ell \nabla \mu^\ell.
\end{equation}
It follows from the Gauss's law $\nabla \cdot \bD = \rho$ that
\begin{equation}
	\nabla\cdot \left(\frac{\partial \bm{D}}{\partial t} +\bm{j} \right)=0.\label{Gs}
\end{equation}
{  Such a derivation, combining the charge continuity equation and the Poisson's equation, was first presented in~\cite{HorngBiophys}}. It then leads to the Maxwell-Amp\`{e}re equation
\begin{equation}
	\frac{\partial \bm{D}}{\partial t} +\bm{j}= \bm{\Theta},
\end{equation} % {\color{red}(Coulomb gauge)}
where $\bm{\Theta}$ satisfies the Coulomb gauge condition $\nabla \cdot \bm{\Theta}= 0$. Also, it is introduced as a degree of freedom to enforce $\nabla \times (\bm{D}/\varepsilon)=\bm{0}$, which implies the existence of the electric potential satisfying the Poisson's equation in a connected spatial domain. With the electric displacement $\bm{D}$, the charge dynamics can be described by the MANP equations
\begin{equation}\label{MANP}
\left\{
\begin{aligned}
& \frac{\partial c^{\ell}}{\partial t}=\nabla \cdot \gamma^\ell\left(\nabla c^\ell-\frac{z^\ell c^\ell \bm{D}}{\varepsilon}\right),~  \ell =1, \cdots, M, \\
&\frac{\partial \bm{D}}{\partial t} =\sum_{\ell=1}^M z^\ell \gamma^\ell \left(\nabla c^\ell-\frac{z^\ell c^\ell \bm{D}}{\varepsilon}\right) +\bm{\Theta}, \\
&\nabla \cdot \bm{\Theta}=0, \\
& \nabla \times\frac{\bm{D}}{\varepsilon}=\bm{0},
\end{aligned}
\right.
\end{equation}
with certain initial concentrations $c^\ell(\bm{r},0)$ and electric displacement $\bm{D}(\bm{r},0)$.

It is not difficult to show that the PNP equations are equivalent to the MANP equations~\reff{MANP}. Taking divergence of both sides of the Maxwell-Amp\`{e}re equation, we obtain
\begin{equation}
	\nabla \cdot \left( \frac{\partial \bm{D}}{\partial t}\right)=\sum_{\ell=1}^{M} z^\ell \frac{\partial c^\ell}{\partial t}.
\end{equation}
Integrating it with respect to time results in the Gauss's law
\begin{equation}\label{ReprodGauss}
	\nabla \cdot  \bm{D}=\sum_{\ell=1}^{M} z^\ell c^\ell + a(\bm{r}),
\end{equation}
where $a(\bm{r})$ is a time-independent function.  From the curl-free equation in~\reff{MANP},  there exists a scalar function $\phi$ such that $\bm{D}/\varepsilon=-\nabla \phi$, which can be combined with the Gauss's law to recover the Poisson's equation
\[
-\nabla \cdot (\varepsilon \nabla \phi)=\sum_{\ell=1}^{M} z^\ell c^\ell  + a(\bm{r}).
\]
If the initial electric displacement is prescribed by $\bm{D}(\bm{r},0) = -\ve \nabla \phi(\bm{r},0)$ with $\phi(\bm{r},0)$ determined by
\[
-\nabla \cdot \left[\varepsilon \nabla \phi(\bm{r},0)\right]=\sum_{\ell=1}^{M} z^\ell c^\ell (\bm{r},0)  + \rho^f,
\]
then we can identify $a(\bm{r})=\rho^f(\bm{r})$ as the distribution function of permanent charges. The Nernst-Planck equations can be readily recovered with $\bm{D}=-\ve \nabla \phi$.

\section{Properties of the MANP formulation} \label{s:prop}
%\subsection{Positivity and mass conservation of ANP}
%One important property of ANP solutions is
%$$
%{\color{red}c^\ell (\bm{r},t) > 0}, \bm{r}\in \Omega ,t\geq0.
%$$
%Integration of each density equation gives
%\[
%\frac{d}{dt} c^\ell = \int_{\partial \Omega} \bm{J}^\ell \cdot \bm{n} ds,
%\]
%which with zero flux boundary condition leads to mass conservation law
%\[
%\int _{\Omega} c^\ell (\bm{r},t) d\bm{r}=\int _{\Omega} c^\ell(\bm{r},0) d\bm{r}
%\]
\subsection{Energy dissipation}
It follows from~\reff{E} that the electrostatic free energy can also be expressed as
\begin{equation}\label{EDc}
\mathcal{F}[c^1, \dots, c^M] = \int_{\Omega} \left (\frac{|\bm{D}|^2}{2\varepsilon} + \sum_{\ell=1}^{M} c^\ell\log c^\ell  \right) d\bm{r}, \mbox{ with } \nabla \cdot \bm{D}=\sum_{\ell=1}^M z^{\ell} c^{\ell}+\rho^f.
\end{equation}
Obviously, $\mathcal{F}[c^1, \dots, c^M]$ is a convex energy functional {  and the proposed MANP model can be shown to be energy dissipative}.
%The evolution of the electric displacement and concentrations in the proposed MANP model can be shown to be a gradient flow of the convex energy $\mathcal{F}$.
%Note that $\mathcal{F}$, as a functional of both the electric displacement and concentrations, is different from the free-energy functional defined in~\reff{E}, in which the electrostatic potential is independent and determined by the Poisson's equation with given concentrations.
\begin{thm}\label{dE/dt}
The MANP equations~\reff{MANP} satisfy the energy dissipation law
\begin{equation} \label{disspt}
\frac{d\mathcal{F}[c^1, \dots, c^M]}{dt} = - \omega \leq 0,
\end{equation}
where the energy dissipation functional
\begin{equation} \label{dissptfunc}
\omega = \sum_{\ell=1}^M \int_{\Omega} \frac{\gamma^\ell c^\ell}{\ve^2} \left| \ve \nabla \log c^\ell - z^\ell \bm{D}  \right|^2 d\bm{r}.
\end{equation}
\end{thm}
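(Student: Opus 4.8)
The plan is to differentiate the convex free-energy functional $\mathcal{F}$ in~\reff{EDc} directly along solutions of the MANP system~\reff{MANP}, splitting $d\mathcal{F}/dt$ into the field-energy contribution and the entropic contribution and then recombining the two into a single sum of squares. Throughout I abbreviate $\bm{G}^\ell := \nabla c^\ell - z^\ell c^\ell \bm{D}/\varepsilon$, so that the concentration equation reads $\partial_t c^\ell = \nabla\cdot(\gamma^\ell \bm{G}^\ell)$. For the entropic part, I compute
\[
\frac{d}{dt}\int_\Omega \sum_\ell c^\ell \log c^\ell \, d\bm{r} = \int_\Omega \sum_\ell (\log c^\ell + 1)\,\nabla\cdot(\gamma^\ell \bm{G}^\ell)\, d\bm{r},
\]
and then integrate by parts, discarding boundary terms by the periodic boundary conditions and using $\nabla(\log c^\ell + 1) = \nabla\log c^\ell$, to obtain $-\int_\Omega \sum_\ell \gamma^\ell \nabla\log c^\ell \cdot \bm{G}^\ell\, d\bm{r}$.

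Next I would treat the field-energy part. Since $\varepsilon$ depends only on space, $\frac{d}{dt}\int_\Omega |\bm{D}|^2/(2\varepsilon)\, d\bm{r} = \int_\Omega (\bm{D}/\varepsilon)\cdot \partial_t \bm{D}\, d\bm{r}$, into which I substitute the Maxwell-Amp\`{e}re equation $\partial_t \bm{D} = \sum_\ell z^\ell\gamma^\ell \bm{G}^\ell + \bm{\Theta}$. The gauge contribution is the crux of the argument: using the curl-free condition $\nabla\times(\bm{D}/\varepsilon)=\bm{0}$ I write $\bm{D}/\varepsilon = -\nabla\phi$ for some scalar $\phi$, integrate by parts once more, and invoke the Coulomb gauge $\nabla\cdot\bm{\Theta}=0$ to conclude that $\int_\Omega (\bm{D}/\varepsilon)\cdot\bm{\Theta}\, d\bm{r} = \int_\Omega \phi\,\nabla\cdot\bm{\Theta}\, d\bm{r} = 0$. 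What survives is $\int_\Omega \sum_\ell z^\ell \gamma^\ell (\bm{D}/\varepsilon)\cdot\bm{G}^\ell\, d\bm{r}$.

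Adding the two contributions gives $d\mathcal{F}/dt = -\int_\Omega \sum_\ell \gamma^\ell(\nabla\log c^\ell - z^\ell\bm{D}/\varepsilon)\cdot\bm{G}^\ell\, d\bm{r}$. The decisive algebraic observation is that $\bm{G}^\ell = c^\ell(\nabla\log c^\ell - z^\ell\bm{D}/\varepsilon)$, which turns the integrand into the perfect square
\[
\gamma^\ell c^\ell\left|\nabla\log c^\ell - z^\ell\bm{D}/\varepsilon\right|^2 = \frac{\gamma^\ell c^\ell}{\varepsilon^2}\left|\varepsilon\nabla\log c^\ell - z^\ell\bm{D}\right|^2,
\]
recovering exactly $-\omega$; nonnegativity of $\omega$ then follows from $\gamma^\ell>0$, $c^\ell\ge 0$, and $\varepsilon>0$. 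I expect the only genuine obstacle to be justifying the vanishing of the gauge term, which rests entirely on the interplay between the curl-free condition (existence of the potential $\phi$) and the divergence-free gauge field $\bm{\Theta}$; the remaining steps are integration by parts and the completion-of-squares identity.
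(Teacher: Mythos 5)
Your proposal is correct and follows essentially the same route as the paper's own proof: differentiate $\mathcal{F}$ along solutions, kill the gauge term $\int_\Omega (\bm{D}/\varepsilon)\cdot\bm{\Theta}\,d\bm{r}$ via the curl-free condition (writing $\bm{D}/\varepsilon=-\nabla\phi$), the Coulomb gauge $\nabla\cdot\bm{\Theta}=\bm{0}$, and periodicity, then complete the square in the flux $\bm{G}^\ell = c^\ell\left(\nabla\log c^\ell - z^\ell\bm{D}/\varepsilon\right)$. The only cosmetic difference is that you phrase the entropic term with $\nabla\log c^\ell$ while the paper writes $\nabla c^\ell/c^\ell$, which are identical.
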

\begin{proof}
Taking derivative of $\mathcal{F}$ with respect to time, we have
\[
\frac{d\mathcal{F}}{dt} = \int_{\Omega} \left[ \frac{\bm{D}}{\varepsilon} \cdot \frac{\partial \bm{D}}{\partial t} + \sum_{\ell=1}^{M} \frac{\partial c^\ell}{\partial t} \left( \log c^\ell+1 \right) \right] d\bm{r}.
\]
It follows from the MANP equations~\reff{MANP} that
\[
\begin{aligned}
\frac{d\mathcal{F}}{dt} &=  \int_{\Omega} \left[ \frac{\bm{D}}{\varepsilon} \cdot \left( \bm{\Theta}- \bm{j} \right) + \sum_{\ell=1}^{M} \nabla \cdot \gamma^\ell\left(\nabla c^\ell-\frac{z^\ell c^\ell \bm{D}}{\varepsilon}\right)  \left( \log c^\ell+1 \right)  \right] d\bm{r} \\
                    &= \int_{\Omega} \left[ \sum_{\ell=1}^{M}  z^\ell \gamma^\ell \frac{\bm{D}}{\varepsilon} \cdot \left(\nabla c^\ell-\frac{z^\ell c^\ell \bm{D}}{\varepsilon}\right) - \sum_{\ell=1}^{M} \frac{\gamma^\ell}{c^\ell} \left(\nabla c^\ell-\frac{z^\ell c^\ell \bm{D}}{\varepsilon}\right) \cdot \nabla c^\ell \right] d\bm{r}\\
                    &= -\sum_{\ell=1}^M \int_{\Omega} \frac{\gamma^\ell }{\ve^2 c^\ell} \left| \ve \nabla  c^\ell - z^\ell c^\ell \bm{D}  \right|^2 d\bm{r} \\
                    &= - \omega \leq 0,
\end{aligned}
\]
where we have used the integration by parts, periodic boundary conditions, and the equality $\displaystyle\int_{\Omega} (\bm{D}/\varepsilon) \cdot \bm{\Theta} d\bm{r} = 0$. The latter equality can be shown by the facts
that $\nabla \times (\bm{D}/\varepsilon)=\bm{0}$, $\nabla \cdot \bm{\Theta}=0$, and periodic boundary conditions.
\end{proof}
\begin{rem}
When non-periodic boundary conditions are considered, the energy evolution of the MANP formulation becomes
\[
%\begin{aligned}
\frac{d\mathcal{F}}{dt} =  - \omega + \int_{\partial \Omega} \left( \phi \frac{\partial \sigma}{\partial t} +\sum_{\ell=1}^{M} \gamma^\ell c^{\ell}  \mu^{\ell} \frac{\partial \mu^{\ell}}{\partial \bm{n}} \right) d S,
%\end{aligned}
\]
where $\bm{n}$ denotes the unit exterior normal at the boundary and $\sigma =-\bm{D} \cdot \bm{n}$ represents the surface charge density at the boundary. Again, one can obtain energy dissipation $d\mathcal{F}/dt \leq 0$, if zero normal ionic flux boundary conditions are imposed for a closed system and time-independent surface charge density is prescribed as a boundary condition for the electric displacement.
\end{rem}

With such an {  energy dissipative law}, it is desirable to design numerical methods that are able to preserve energy dissipation at discrete level. Structure-preserving numerical methods for the MANP model will be studied in future works.

\subsection{Steady state}
The convex free energy $\mathcal{F}[c^1, \dots, c^M]$ can be shown to be bounded below by using the inequality $x \log x \geq - 1/e$. Therefore, the existence of a steady state of the MANP equations can be established by  the energy dissipation law; cf.~Theorem~\ref{dE/dt}. It follows from~\reff{disspt}, \reff{dissptfunc}, \reff{MANP}, and~\reff{ReprodGauss} that the steady state is characterized by equations
\begin{equation}\label{PBSyst}
\left\{
\begin{aligned}
&\ve \nabla c^\ell -z^\ell c^\ell  \bm{D}=0,\\
&\nabla \cdot \bm{D}=\sum_{\ell=1}^M z^\ell c^\ell +\rho^f,\\
&\nabla \times \frac{\bm{D}}{\ve}=0.
\end{aligned}\right.
\end{equation}
Such a system is equivalent to the well-known Poisson--Boltzmann equation. Indeed, from the curl-free condition we have that the electrostatic potential satisfies $\bm{D}/\varepsilon=-\nabla \phi$.  By the first equation in~\reff{PBSyst}, we obtain the Boltzmann distributions
\[
c^\ell = c_N^{\ell} e^{-z^\ell \phi},
\]
where $c_N^{\ell} = N^{\ell}/\int_{\Omega} e^{-z^\ell \phi} d\bm{r}$ is a normalization constant determined by the total ionic mass $N^{\ell}$. Substituting the expressions of $\bm{D}$ and $c^\ell$ into the second equation of system~\reff{PBSyst}, we obtain the charge conserving PB (ccPB) equation~\cite{Lee_JMP2014,Lee_NonL2010, Wan_PRX2014}
\begin{equation}
-\nabla \cdot (\varepsilon \nabla \phi)=\sum_{\ell=1}^{M} \frac{z^\ell N^{\ell}}{\int_{\Omega} e^{-z^l \phi} d\bm{r}}  e^{-z^l \phi}  + \rho ^f.
\end{equation}
Note that, if an open system is connected to a bulk ionic solution reservoir, the normalization constant can be determined with the bulk ionic concentrations $c^{\ell, \infty}$, leading to the classical Boltzmann distributions $c^\ell = c^{\ell, \infty} e^{-z^\ell \phi}.$ Then, we can obtain the classical PB equation.

As the governing equation for the steady state,  the ccPB equation can be solved from a variational perspective. The convex structure plays a crucial role in the design of efficient numerical methods for finding steady states via minimization~\cite{BSD:PRE:2009, PujosMaggs_JCTC15}. In literature, the energy functional of the electrostatic potential is known to be non-convex~\cite{CDLM_JPCB08}. In order to have a minimizing principle, rather a stationary principle, the Legendre transform was used to formulate a local, convex dual functional~~\cite{MaggsPB_EPL12}. Our MANP formulation provides a promising approach to investigate the steady state. Energy dissipating numerical schemes that allow large time step sizes can be used to find the steady state robustly and efficiently, especially for the system with conserved total ionic mass.

\subsection{Beyond mean field}
In mean-field approximations, ions are treated as point charges without direct ion-ion interactions, e.g., steric interactions of short range and Coulombic correlations of long range. Therefore, the MANP theory proposed above does not work well for concentrated ionic solutions of high valences, where the mean-field approximations break down.  To overcome the defects, ionic steric effects and Coulomb correlations can be included within our MANP framework. Consider a free-energy functional
\begin{equation}\label{F-BMF}
\mathcal{F}[c^1, \dots, c^M] = \int_{\Omega} \left (\frac{|\bm{D}|^2}{2\varepsilon} + \sum_{\ell=1}^{M} c^\ell\log c^\ell  \right) d\bm{r} + \mathcal{F}^{st}  +  \mathcal{F}^{co},
\end{equation}
where $\mathcal{F}^{st}$ describes the excess free energy due to steric effects of finite-size ions and $\mathcal{F}^{co}$ represents the excess Coulomb interaction energy.  Such two excess energies lead to the corresponding excess chemical potentials defined by
\[%= z^{\ell}  e \phi+ k_B T\log (\Lambda^3 c^\ell )
\mu^{\ell, {st}}=\frac{\delta F^{st}}{\delta c^\ell}~~ \text{and } ~ \mu^{\ell, {co}}=\frac{\delta F^{co}}{\delta c^\ell}.
\]

There are several models to describe the excess energy due to steric effects in literature. One popular model is based on the statistical mechanics of ions and solvent molecules on a cubic lattice of spacing in the continuum limit~\cite{BKS+:ACIS:2009}.  In such a model,  the entropy of solvent molecules, in addition to the entropy of ions, accounts for the excess chemical potential~\cite{Bikerman:PM:1942, BAO:PRL:1997, LLXS:NonL:2013, LZ:BJ:2011}
%\begin{equation}\label{LGmu}
\[
\mu^{\ell, {st}}= - \frac{v_{\ell}}{v_0} \ln \left(v_0 c_0 \right),
%\end{equation}
\]
where $v_\ell$ and $v_0$ are the volume of ions of the $\ell$-th species and solvent molecule, respectively,  and $c_0 = c_0(\mathbf{r},t)$ is the solvent concentration defined by
%\begin{equation}\label{c0}
\[
c_0(\mathbf{r},t) = v_0^{-1} \left[ 1 - \sum_{\ell=1}^M v_\ell c^{\ell}(\mathbf{r},t) \right].
\]
%\end{equation}
Alternatively, the steric effects due to hard-sphere interactions can be described by the Lennard-Jones potential energy, which gives rise to a nonlocal model~\cite{HyonLiuBob_CMS10}. Local approximations of nonlocal Lennard-Jones integrals can be employed to obtain local models by Fourier analysis~\cite{HorngLinLiuBob_JPCB12, LinBob_CMS14}. The excess chemical potential is then given by
\begin{equation}\label{LJmu}
\mu^{\ell, {st}}= \sum_{j=1}^M \omega_{\ell j}c_j,
\end{equation}
where $(\omega_{ij})_{M\times M}$ is a symmetric matrix to represent the cross interactions between different species and self interactions between ions of the same species~\cite{DingWangZhou_JCP2019}.  Also, there are other complicated models to describe steric effects, such as the fundamental measure density functional theory~\cite{JZWu_JPCM14,Roth:JPC:2002,Wu:JCP:2002}.

In addition to ion-ion correlation, the Coulomb correlation includes the dielectric self energy of ions.  The excess chemical potential due to Coulomb correlation is given by~\cite{LJX:SIAP:2018,MX:JCP:14,MXZ:SIAP:2021,XML:PRE:2014}
\begin{equation}\label{muco}
\mu^{\ell, {co}}= \frac{(z^\ell )^2}{2} \left[ \frac{1}{4\pi a^\ell} \left(\frac{1}{\ve (\bm{r}) } - 1 \right) +  \lim _{\bm{r}^{\prime} \rightarrow \bm{r}}\left (G\left(\bm{r}, \bm{r}^{\prime}\right)- \frac{1}{4\pi \ve (\bm{r})  |\bm{r} - \bm{r}^{\prime} |} \right) \right],
\end{equation} %L^3c_0
where $a^\ell$ is the radius for ions of the $\ell$-th species, and $G(\bm{r}, \bm{r}^{\prime})$ is the Green's function satisfying the generalized Debye-H\"{u}ckel equation
\begin{equation} \label{gdh}
	- \nabla \cdot \ve (\bm{r})   \nabla G\left(\bm{r}, \bm{r}^{\prime}\right)+2 I(\bm{r},t)  G\left(\bm{r}, \bm{r}^{\prime}\right)=\delta\left(\bm{r}-\bm{r}^{\prime}\right).
\end{equation}
To consider boundary effects, we consider inhomogeneous dielectric permittivity and the ionic strength profile given by
\[
	I(\bm{r},t)=\left\{
	\begin{aligned}
		&\frac{1}{2} \sum_{\ell=1}^M (z^\ell)^{2} c^\ell (\bm{r},t),~ &\bm{r} \in \Omega, \\
		&0,  &\bm{r} \notin \Omega.
	\end{aligned}\right.
\]

With the excess contributions from steric effects and Coulomb correlation, our MANP model becomes
\begin{equation}\label{MANP-Beyond}
\left\{
\begin{aligned}
& \frac{\partial c^{\ell}}{\partial t}=\nabla \cdot \gamma^\ell\left(\nabla c^\ell-\frac{z^\ell c^\ell \bm{D} }{\varepsilon} + c^\ell \nabla \mu^{\ell, {st}}+ c^\ell \nabla \mu^{\ell, {co}} \right),~  \ell =1, \cdots, M, \\
&\frac{\partial \bm{D}}{\partial t} =\sum_{\ell=1}^M z^\ell \gamma^\ell \left(\nabla c^\ell-\frac{z^\ell c^\ell \bm{D} }{\varepsilon} + c^\ell \nabla \mu^{\ell, {st}}+ c^\ell \nabla \mu^{\ell, {co}} \right)+\bm{\Theta}, \\
&\nabla \cdot \bm{\Theta}=0, \\
& \nabla \times\frac{\bm{D}}{\varepsilon}=\bm{0}.
\end{aligned}
\right.
\end{equation}
	
\begin{rem} Solving the generalized Debye-H\"uckel equation \eqref{gdh} is challenging due to the high-dimensional Green's function $G(\bm{r},\bm{r}^{\prime})$.
For channel problems which are often studied in nanodevices, asymptotic solutions by using Wentzel-Kramers-Brillouin approximation
can be employed; See \cite{MXZ:PRE:2021,MXZ:SIAP:2021} for the recent work. Moreover, the Coulomb correlation \eqref{muco} only requires the
self Green's function. Numerically, this corresponds to the diagonal of the inverse of the operator matrix and
can be calculated efficiently with linear scaling \cite{Tu:JCP:2022} by coupling the selective inversion with the hierarchical interpolative factorization.
\end{rem}

\subsection{Local curl-free relaxation}\label{localsec}
One advantage of using the MANP model is featured by a local relaxation algorithm that is Gauss-law satisfying, which can solve the curl-free constraint in the MANP equations  efficiently and robustly.
%{\color{red} This idea was inspired by particle simulations in which the slightest motion of particles requires a global recalculation of Poisson's equation at every time step and this recalculation accounts for a disproportionate share of cost. }
%Motivated by the observation that electrostatics is the static limit of the Maxwell equations which are local,  Maggs pointed out that
%the Maxwell equations are not the only way to produce the Coulomb's law and
%the electric field has its own diffusive propagation~\cite{M:JCP:2002} but the equilibrium is still determined by Coulomb interaction.
%The local curl-free relaxation starts from Gauss' law satisfying electric displacement $\bm{\Theta}^*$, obtained from Amp\'{e}re equation, as we will see in Sec.\ref{s:Results}.
%Then we update the displacement so that at each vertex the sum of entering and leaving field fluxes are unchanged for the propose that Gauss's law is satisfied during the updates.
The local curl-free relaxation algorithm is derived by considering the minimization of a convex electric field energy
\begin{equation}\label{MinPro}
\mathcal{F}_{\rm pot}[\bm{D}] :=  \int_{\Omega} \frac{|\bm{D}|^2}{2\varepsilon} d\bm{r}
\end{equation}
with $\bm{D}$ subject to the Gauss' law.
Define the Lagrangian
\[
\mathcal{L}[\bm{D}, \lambda]:= \int_{\Omega} \frac{|\bm{D}|^2}{2\varepsilon} d\bm{r} - \int_{\Omega} \lambda  \left(\nabla \cdot \bm{D}-\rho \right) d\bm{r},
\]
where $\lambda$ is the Lagrange multiplier. By the Lagrange multiplier method, the minimizer of the constraint optimization problem satisfies
\[
\bm{D}= -\ve \nabla \lambda~\mbox{and}~ \nabla \cdot \bm{D}=\rho.
\]
This implies that the Lagrange multiplier $\lambda$ is the electrostatic potential and $\nabla \times\bm{D}/\varepsilon=\bm{0}$. Therefore, in order to achieve the curl-free condition, one can design an algorithm that can decrease the electric field energy most in each update step for $\bm{D}$.
%The optimal flux is computed to make the potential energy decrease most.
Another perspective is based on the general solution to the Gauss' law: $\bm{D}/\varepsilon=-\nabla \phi + \nabla \times \bm{Q}$, where $\bm{Q}$ is the rotational degrees of freedom. As such,
$$
\mathcal{F}_{\rm pot}[\bm{D}] = \int_{\Omega} \frac{\varepsilon}{2} \left[\left(\nabla \phi \right)^2+ \left(\nabla \times \bm{Q}\right)^2 \right] d\bm{r}.
$$
Thus, the minimization corresponds to the elimination of the rotational degrees of freedom in $\bm{D}$, which is dominated by a torque~\cite{MR:PRL:2002}.

Specifically, we consider spatial discretization in a rectangular domain $\Omega=[0, L_1]\times [0,L_2]$ with periodic boundary conditions. The domain is covered by a uniform grid  with the grid spacing $\Delta x$ and $\Delta y$ in the two coordinate directions. Denote by $\Delta \Omega =\Delta x \Delta y$ the area of a unit cell. Let $N_x$ and $N_y$ be the number of grid points along each dimension. Let $\bm{D}=(D_x, D_y)$. In particular, $c_{i,j}^{\ell}$ stands for the numerical approximation of $c^{\ell}$ on the grid point $\left(i\Delta x, j\Delta y\right)$, and $D_{i+1/2, j}$ stands for the numerical approximation of $D_x$ on the grid point $\left((i+1/2)\Delta x , j \Delta y\right)$  for  $i=1, \ldots, N_{x}$ and  $j=1, \ldots, N_{y}$. Analogously, $D_{i, j+1/2}$ stands for the numerical approximation of $D_y$ on the grid point $\left( i\Delta x , (j+1/2) \Delta y\right)$.

%\left( i\Delta x , (j+1/2) \Delta y\right)\right) $ (\text{resp. } D_{i, j+1/2}) $  $\left(\text{resp. }\left( i\Delta x , (j+1/2) \Delta y\right)\right)$ $\left(\text{resp. } D_y\right)$

\begin{figure}[h!]
	\begin{center}
		\includegraphics[scale=0.5]{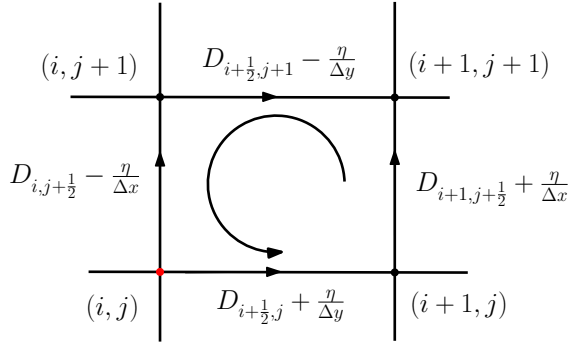}
		\caption{Update diagram of the electric displacements in a single cell.}
		\label{fD}
	\end{center}
\end{figure}

With central differencing, the electric field energy in the optimization problem~\reff{MinPro} is approximated with second-order accuracy by
\begin{equation}
	\mathcal{F}_{\rm pot}^{h}=\frac{\Delta {\Omega}}{2} \sum_ {i,j} \left( \frac{D^2_{i+\frac{1}{2},j}}{\varepsilon_{i+\frac{1}{2},j}}  + \frac{D^2_{i,j+\frac{1}{2}}}{\varepsilon_{i,j+\frac{1}{2}}} \right),
\end{equation}
and the Gauss's law is approximated by
\begin{equation}
	(\nabla \cdot) _h \bm{D}_{i,j}:=
	\frac{D_{i+\frac{1}{2},j}-D_{i-\frac{1}{2},j}}{\Delta x}
	+\frac{D_{i,j+\frac{1}{2}}-D_{i,j-\frac{1}{2}}}{\Delta y}
	=\sum_\ell z^\ell c^\ell_{i,j} +\rho^f _{i,j}.\label{gausslaw}
\end{equation}
Our relaxation starts from the solution to the discrete Maxwell-Amp\`{e}re equation.
%~\reff{disam}, $\bm{D}^*$, which satisfies the Gauss's law.
Successive local updates of the electric displacements in each cell are proposed to minimize $\mathcal{F}_{\rm pot}^{h}$~\cite{MR:PRL:2002}; See a schematic diagram shown in Figure~\ref{fD}. For instance, consider a grid cell which has four nodes with indices $(i,j)$, $(i+1,j)$, $(i, j+1)$, $(i+1, j+1)$, and four respective edges. Denote by $D_{i+1/2, j}$, $D_{i+1, j+1/2}$, $D_{i+1/2, j+1}$, and $D_{i, j+1/2}$ the four electric displacement components defined on the four edges. The arrows on the edges in Figure~\ref{fD} indicate the predefined positive directions.  In the update, a flux $\eta$ that rotates along four edges is introduced in such a way that the discrete energy $\mathcal{F}_{\rm pot}^{h}$ is locally minimized. Specifically, the updates of displacements are given by
\begin{equation}
	\begin{aligned}
		D_{i+\frac12, j} &\leftarrow D_{i+\frac12, j}+\frac{\eta}{\Delta y} ,\\
		D_{i+1, j+\frac12} &\leftarrow D_{i+1, j+\frac12}+\frac{\eta}{\Delta x},\\
		D_{i+\frac12, j+1} &\leftarrow D_{i+\frac12, j+1}-\frac{\eta}{\Delta y} ,\\
		D_{i, j+\frac12}     &\leftarrow D_{i, j+\frac12}-\frac{\eta}{\Delta x}.
	\end{aligned}\label{updateD}
\end{equation}
It is remarkable that the Gauss's law is still rigorously maintained at all four nodes with such an update scheme, i.e., the constraint of the discrete Gauss's law~\reff{gausslaw} is satisfied for each update.  The associated change in the energy after updates in displacements reads
\begin{equation}\label{deltaE}
	\begin{aligned}
		\delta \mathcal{F}_{\rm pot}^{h}[\eta] = &\frac{\eta ^2}{2}
		\left[\frac{\Delta x}{\Delta y} \left(\frac{1}{\varepsilon_{i+\frac{1}{2},j}}+\frac{1}{\varepsilon_{i+\frac{1}{2},j+1}} \right) +\frac{\Delta y}{\Delta x} \left(\frac{1}{\varepsilon_{i,j+\frac{1}{2}}}+\frac{1}{\varepsilon_{i+1,j+\frac{1}{2}}} \right) \right] \\
		&+ \eta \left[\Delta x\left(\frac{D_{i+\frac{1}{2},j}}{\varepsilon_{i+\frac{1}{2},j}}-\frac{D_{i+\frac{1}{2},j+1}}{\varepsilon_{i+\frac{1}{2},j+1}}\right)+  \Delta y\left(\frac{D_{i+1,j+\frac{1}{2}}}{\varepsilon_{i+1,j+\frac{1}{2}}}-\frac{D_{i,j+\frac{1}{2}}}{\varepsilon_{i,j+\frac{1}{2}}}\right) \right],
	\end{aligned}	
\end{equation}
which is minimized with
\begin{equation}\label{eta}
	\eta= - \frac{\Delta y(\Delta x)^2\left(\frac{D_{i+\frac{1}{2},j}}{\varepsilon_{i+\frac{1}{2},j}}-\frac{D_{i+\frac{1}{2},j+1}}{\varepsilon_{i+\frac{1}{2},j+1}}\right)+  \Delta x(\Delta y)^2\left(\frac{D_{i+1,j+\frac{1}{2}}}{\varepsilon_{i+1,j+\frac{1}{2}}}-\frac{D_{i,j+\frac{1}{2}}}{\varepsilon_{i,j+\frac{1}{2}}}\right)}{(\Delta x)^2 \left(\frac{1}{\varepsilon_{i+\frac{1}{2},j}}+\frac{1}{\varepsilon_{i+\frac{1}{2},j+1}} \right) +(\Delta y)^2 \left(\frac{1}{\varepsilon_{i,j+\frac{1}{2}}}+\frac{1}{\varepsilon_{i+1,j+\frac{1}{2}}} \right)}.
\end{equation}
This gives an optimal flux for the displacement changes in terms of minimizing the discrete energy.
%Taking the derivative of $\delta \mathcal{F}_{\rm pot}^{h}[\eta] $ with respect to flux $\eta$ gives exactly the torque $ \ve \nabla \times \bm{E}$~\cite{M:JCP:2002}.
Such a local update loops over all the grid points for certain steps until a stopping criterion $\ve_{\rm tol}$ is met. Since analytical expressions are available for each update step, it is clear that the complexity of the algorithm is $\mathcal{O}(N)$, where $N=N_x\cdot N_y$.

This local curl-free relaxation uses the intrinsic dynamics of the electric field, i.e., the propogation of a retarded and diffusive vector field. The equilibrium of the local curl-free relaxation gives the electrostatic interactions essentially~\cite{M:JCP:2002}.   The whole relaxation can be regarded as a projection of the electric displacement from the non-curl-free space to the curl-free space, while maintaining in the Gauss' law satisfying surface at the same time.

\begin{rem}
	\begin{compactenum}
		%\item[]
		\item[\rm (1)]
		The convergence of the local relaxation algorithm with respect to iteration steps can be proved by the fact that $\mathcal{F}_{\rm pot}^{h}$ is nonnegative and $\delta \mathcal{F}_{\rm pot}^{h}(\eta)\leq 0 $ when $\eta$ is given by~\reff{eta}.
		\item[\rm (2)]
		Taking the derivative of the updated electric field energy with respect to flux $\eta$ gives the torque which drives the circulatory dynamics. It explains why the curl-free condition is satisfied when the relaxation converges. In 2D, the discrete curl of $\bm{D}/\varepsilon$ is approximated by
		\begin{equation*}
			(\nabla \times)_h \Big( \frac{\bm{D}}{\varepsilon} \Big)_{i+\frac{1}{2},j+\frac{1}{2}}=\frac{1}{\Delta x} \bigg(\frac{D_{i+1,j+\frac{1}{2}}}{\varepsilon_{i+1,j+\frac{1}{2}}}-\frac{D_{i,j+\frac{1}{2}}}{\varepsilon_{i,j+\frac{1}{2}}}\bigg) + \frac{1}{\Delta y}  \bigg(\frac{D_{i+\frac{1}{2},j}}{\varepsilon_{i+\frac{1}{2},j}}-\frac{D_{i+\frac{1}{2},j+1}}{\varepsilon_{i+\frac{1}{2},j+1}}\bigg).
		\end{equation*}
		It follows from~\reff{deltaE} and~\reff{eta} that $\delta \mathcal{F}_{\rm pot}^{h}=0$  is achieved if and only if $\eta =0$, which is in turn equivalent to $(\nabla \times)_h \left( \bm{D}/\varepsilon\right)_{i+1/2,j+1/2}=0.$ Therefore, the curl-free constraint is achieved as the local relaxation algorithm converges.
		
		\item[\rm (3)] Starting from $\bm{D}^{*}$, the local curl-free relaxation gives rise to the updated $\bm{D}^{n+1}\in \bm{D}^{*}+ \mathbb{V}$, where the space $\mathbb{V}:= \{ \bm{v} | (\nabla \cdot)_h \bm{v} =0 \}$.
		
		\item[\rm (4)]
		It is easy to extend this relaxation algorithm to nonuniform grids by considering the integration of the Gauss's law.
		
		\item[\rm (5)]
	    In each update step, an explicit expression for $\bm{D}$ is available. Numerical tests in Sec.~\ref{accuracy} demonstrate that with a good guess for $\bm{\Theta}$ using an extrapolation, the local curl-free relaxation only takes a few iteration steps. Thus, the local curl-free relaxation algorithm is expected to have linear complexity.	
		%The local curl-free relaxation step is of linear complexity because of the explicit expression of the updated $\bm{D}$ and a few iteration steps. In Sec.~\ref{accuracy}, an appropriate choice of $\bm{\Theta}$ can reduce the relaxation steps so the local curl-free relaxation is expected to have high efficiency.
	\end{compactenum}
	
\end{rem}

\section{Results}\label{s:Results}
%With the ANP model, the electric displacement can be explicitly updated by the Amp\'ere equation, without solving the Poisson's equation. The curl-free condition can be approached efficiently by using the local relaxation algorithm described in Sec.~\ref{localsec}, in which the optimization problem in each step has an explicit solution. We first present the whole numerical method for the ANP model.
In this section, we shall present the whole numerical method for the MANP model and simulation results to demonstrate the attractive features of the model.
\subsection{Numerical method}
%We first present the numerical discretization.
Let $\Delta t$ be the time step size. Denote by $c^{\ell,n}$ and $\bm{D}^{n}$ the approximations to $c^\ell$ and $\bm{D}$, respectively, at time $t_n:=n\Delta t$ for a nonnegative integer $n$. An Euler discretization of the time derivative leads to a semi-implicit scheme for the Nernst--Planck equations,
$
c^{\ell,n+1}=c^{\ell,n}-\Delta t\nabla \cdot \bm{J}^{\ell, n}, \label{disnp}
$
with the flux
$
\bm{J}^{\ell, n} = -\gamma^\ell\left(\nabla c^{\ell,n+1}- z^\ell c^{\ell,n+1} \bm{D}^n/\varepsilon\right).
$
Similarly, the Maxwell-Amp\`{e}re equation is discretized by
\begin{equation}\label{disam}
\bm{D}^{*}=\bm{D}^{n} + \Delta t\left(-\sum _{\ell=1}^M z^\ell \bm{J}^{\ell, n} +\bm{\Theta}^n\right),
\end{equation}
where $\bm{D}^{*}$ is an intermediate approximation to $\bm{D}^{n+1}$. Then, $\bm{D}^{*}$ is further corrected to satisfy the curl-free constraint in~\reff{MANP}. See Section~\ref{localsec} for the details. Notice that the numerical scheme~\reff{disam} for the Maxwell-Amp\`{e}re equation is explicit once $c^{\ell,n+1}$ is obtained. There are several options for $\bm{\Theta}^n$, as long as it satisfies the divergence free condition. In our numerical simulations, we test the performance of the scheme by comparing results with the following choices
\begin{align}
	\bm{\Theta}^n_1 &=0, \label{Theta1}\\
	\bm{\Theta}^n_2 &=\frac{\bm{D}^{n} - \bm{D}^{n-1}}{{\Delta t}}+\sum _{\ell=1}^M z^\ell \bm{J}^{\ell, n-1}. \label{Theta2}
	\\ \bm{\Theta}^n_3 &=\frac{3}{2}\Big(\frac{\bm{D}^{n}-\bm{D}^{n-1}}{{\Delta t}}+\sum _{\ell=1}^M z^\ell \bm{J}^{\ell, n-1}\Big)-\frac{1}{2}\Big(\frac{\bm{D}^{n-1}-\bm{D}^{n-2}}{{\Delta t}}+\sum _{\ell=1}^M z^\ell \bm{J}^{\ell, n-2}\Big) \label{Theta3}.
\end{align}
Notice that an appropriate choice of $\bm{\Theta}^n$ can provide a better approximation $\bm{D}^{*}$ to $\bm{D}^{n+1}$, and reduce the relaxation steps needed in the correction from $\bm{D}^{*}$ to $\bm{D}^{n+1}$. The effect of different choices of $\bm{\Theta}^{n+1}$ on relaxation steps will be thoroughly studied in next section.
% As a first-order approximation, $\Theta^n_2$ is supposed to give a good estimate of $\bm{\Theta}^{n+1}$.
%%$\Theta^n_3$ is supposed to give a higher accurate approximation than the other two considering that it is actually a second-order extrapolation.
%This is studied in our numerical tests section.
It follows from~\reff{disam} that
\[
	 \nabla \cdot \bm{D}^*-\nabla \cdot \bm{D}^n =\sum_\ell z^\ell  (c^{\ell,n+1}-c^{\ell,n}).
\]
Thus, it is straightforward to verify that $\bm{D}^{*}$ given by~\reff{disam} satisfies the Gauss's law, i.e.,
$\nabla \cdot \bm{D}^{*}=\sum  z^\ell c^{\ell,n+1}+\rho^f,$
provided that $\bm{D}^{n}$ satisfies the Gauss's law. %this law is satisfied at the previous time step.

Central differencing and upwinding schemes are employed to discretize the spatial derivatives.
Our numerical method equipped with the curl-free relaxation for the MANP model is summarized in the following algorithm.
%The intermediate electric displacement $\bm{D}^{*}$ often fails to satisfy the curl-free constraint in the ANP formulation. So we need to perform the  to satisfy the curl-free condition.
\begin{algorithm}[H] \label{alg}
	\caption{Numerical method for the MANP equations}
	{\bf Input:}  Final time $T$, stopping criterion $\ve_{\rm tol}$ for the local curl-free relaxation, initial ionic distributions $c^{\ell, 0}$, and the corresponding displacement field $\bm{D}^0$ that satisfies the Gauss's law;
	\begin{algorithmic}[1]
 		\State Solve the Nernst--Planck equations to get $c^{\ell,n+1}$, $\ell=1, \dots, M$; %the first-order scheme~\reff{disc} or second-order scheme~\reff{cn} for
		 \State Given a divergence free $\bm{\Theta}^n$ shown in \reff{Theta1}-\reff{Theta3}, solve the Maxwell-Amp\`{e}re equation with the scheme~\reff{disam} to obtain an intermediate displacement field $\bm{D}^*$; %the first-order scheme~\reff{D*1}-\reff{D*2} or second-order scheme~\reff{D*1cn}-\reff{D*2cn} for
 		\State Perform the local curl-free relaxation on $\bm{D}^*$ to obtain $\bm{D}^{n+1}$.%, as detailed in Section~\ref{localsec};
	\State If $T\leq t^{n+1}$, stops; Otherwise, go to Step 1.
	\end{algorithmic}
\end{algorithm}

\subsection{Effect of $\bm{\Theta}$}\label{accuracy}
We study the effect of different choices of $\bm{\Theta}$ on the local relaxation algorithm, and conduct a series of simulations to demonstrate the performance of our MANP model in the description of charge dynamics.
%preserving solution positivity, mass conservation, and energy dissipation at discrete level.
%In this part, we will numerically demonstrate the performance of the proposed numerical method with two examples. The linear solver we choose is direct sparse LU method based on Intel Math Kernel Library.

In simulations, we consider binary electrolytes in a rescaled domain $\Omega=(-1, 1)\times (-1, 1)$ with periodic boundary conditions.  The charge dynamics is described by the MANP equations
\[
\left\{
\begin{aligned}
& \frac{\partial c^{\ell}}{\partial t}=\nabla \cdot \gamma^\ell\left(\nabla c^\ell-\frac{z^\ell c^\ell \bm{D}}{\varepsilon}\right),~  \ell =1, 2, \\
&\frac{\partial \bm{D}}{\partial t} =\sum_{\ell=1}^2 z^\ell \gamma^\ell \left(\nabla c^\ell-\frac{z^\ell c^\ell \bm{D}}{\varepsilon}\right) +\bm{\Theta}, \\
&\nabla \cdot \bm{\Theta}=0, \\
& \nabla \times\frac{\bm{D}}{\varepsilon}=\bm{0}.
\end{aligned}
\right.
\]
We take the parameters $\gamma^\ell=0.01$, $z^1=1$, and $z^2=-1$. The permanent charge in the system is given by the distribution function
\[
	\begin{aligned}
		\rho^{f}(x,y)=&e^{-100\left[\left(x+\frac{1}{2}\right)^{2}+\left(y+\frac{1}{2}\right)^{2}\right]}-e^{-100\left[\left(x+\frac{1}{2}\right)^{2}+\left(y-\frac{1}{2}\right)^{2}\right]}\\
		&\quad -e^{-100\left[\left(x-\frac{1}{2}\right)^{2}+\left(y+\frac{1}{2}\right)^{2}\right]}+e^{-100\left[\left(x-\frac{1}{2}\right)^{2}+\left(y-\frac{1}{2}\right)^{2}\right]}.
	\end{aligned}
\]
We consider a rescaled dielectric coefficient given by
\[
\varepsilon(x, y)=
	\frac{\varepsilon_w-\varepsilon_m}{2}\left[\tanh \left(100 d-25\right)+1\right]+\varepsilon_m,
\]
where $\varepsilon_w$ and $\varepsilon_m$ are the dielectric constants for solvent and solute, respectively, and
\[
d=\left\{
\begin{aligned}
	& \sqrt{\left(x-\frac{1}{2}\right)^{2}+\left(y-\frac{1}{2}\right)^{2}}, ~
	0<x\leq1,  0<y\leq1, \\
	& \sqrt{\left(x-\frac{1}{2}\right)^{2}+\left(y+\frac{1}{2}\right)^{2}}, ~
	0<x\leq1, -1\leq y\leq0, \\
	& \sqrt{\left(x+\frac{1}{2}\right)^{2}+\left(y-\frac{1}{2}\right)^{2}}, ~
	-1\leq x\leq0, 0<y\leq1, \\
	&\sqrt{\left(x+\frac{1}{2}\right)^{2}+\left(y+\frac{1}{2}\right)^{2}}, ~
	-1\leq x\leq0, -1 \leq y\leq0.
\end{aligned}
\right.\]
The initial concentrations are given by
$$
c^\ell(x, y, 0)=0.1,~~\ell=1,2,
$$
and the initial electric displacement is given by $\bm{D} (x, y, 0) = - \ve \nabla \phi$, where $\phi$ is obtained by solving the Poisson's equation
\[
-\nabla \cdot (\varepsilon \nabla \phi)=\sum_{\ell=1}^{2} z^\ell c^\ell (x, y, 0) + \rho ^f(x,y).
\]

To understand the effect of different choices of $\bm{\Theta}$ on the convergence of our local curl-free relaxation algorithm, we study the number of relaxation steps, denoted by $R_s$, that is needed to meet the stopping criterion $\ve_{\rm tol}$ in each time step. Numerical tests are performed with different choices of $\bm{\Theta}$, i.e., $\bm{\Theta}^n_1$, $\bm{\Theta}^n_2$, and $\bm{\Theta}^n_3$ in~\reff{Theta1}-\reff{Theta3}.

\begin{figure}[ht]
	\centering
	\includegraphics[scale=0.55]{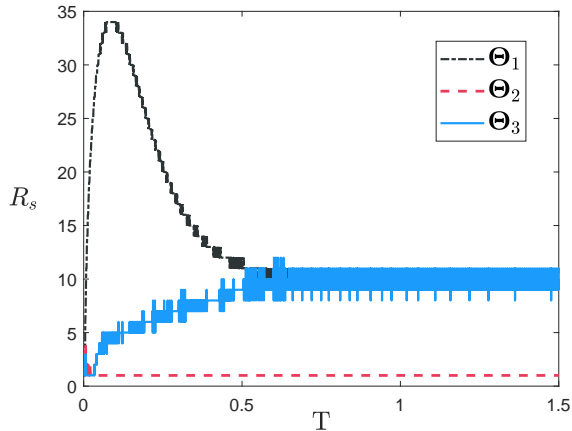}
	\caption{%The dependence of relaxation steps $R_s$ in the local curl-free relaxation algorithm on the grid number $N$ in each dimension for different choices of $\bm{\Theta}$ with a stopping criterion $\ve_{\rm tol}= 1$E$-7$.
		The number of relaxation steps $R_s$ against time in the local curl-free relaxation algorithm for different choices of $\bm{\Theta}$ with a stopping criterion $\ve_{\rm tol}= 1$E$-7$.
	 }
	\label{figure-theta}
\end{figure}
%From Fig.~\ref{figure-theta}(Left), one can see that, with a stopping criterion $\ve_{\rm tol}= 1$E$-7$, the average number of relaxation steps $R_s$ grows to a plateau value as the grid number increases for $\bm{\Theta}^n_1$. However, it can be much improved by using $\bm{\Theta}^n_2$ or $\bm{\Theta}^n_3$.
%, for which the $R_s$ first increases a bit and later decreases as the mesh refines.
From Fig.~\ref{figure-theta}, one can see that, with a stopping criterion $\ve_{\rm tol}= 1$E$-7$, the number of relaxation steps $R_s$ behaves significantly differently for different $\bm{\Theta}$. For the trivial choice $\bm{\Theta}^n_1$,  $R_s$ grows rapidly in the first stage and levels off and oscillates in the late stage about $10$. For $\bm{\Theta}^n_3$, it requires less steps and $R_s$ grows to the same plateau value as that for $\bm{\Theta}^n_1$ in the late stage. It is of interest to find that $R_s$ for $\bm{\Theta}^n_2$ only takes less than $5$ steps in first a few steps and quickly decreases to $1$ step. This can be explained by the fact that the first-order Euler discretization is used in the numerical method and $\bm{\Theta}^n_2$ is a first-order extrapolation of $\bm{\Theta}$.
%It illustrates that, given a stopping criterion, the computational complexity of the local relaxation algorithm with a good estimate of $\bm{\Theta}$ is linear with respect to grid numbers, i.e., $\mathcal{O}(N)$.
Overall, the study on the number of relaxation steps $R_s$ reveals that, a better estimate of the $\bm{\Theta}$ leads to $\bm{D}^*$ that is closer to divergence free, lowering the relaxation steps in the local relaxation algorithm. With $\bm{\Theta}^n_2$, the relaxation steps can be as low as $1$, indicating that the local relaxation algorithm is of linear computational complexity. It is expected that the local relaxation algorithm will be more efficient than the standard Poisson solvers for variable coefficients.

%In addition, the dependence of average relaxation steps $R_s$ on the stopping criterion $\ve_{\rm tol}$ is illustrated in Fig.~\ref{figure-theta} (Right), where the grid number $N$ is fixed. It is expected that the required $R_s$ increases monotonically when a more restrictive stopping criterion $\ve_{\rm tol}$ is used. Again, one can find that a better estimate of the $\bm{\Theta}$ requires less $R_s$. Therefore, we expect that the performance of the local relaxation algorithm can be further improved with a higher order estimate of the $\bm{\Theta}$. Overall, the study on the number of relaxation steps $R_s$ reveals that, given a stopping criterion, the computational complexity of the local relaxation algorithm with a good estimate of $\bm{\Theta}$ is linear with respect to grid numbers, i.e., $\mathcal{O}(N)$.

\subsection{Charge Dynamics}
%In this section, we perform a series of numerical experiments to demonstrate that the proposed numerical scheme ~\reff{disc} and \reff{D*1}-\reff{D*2}, combined with the local curl-free algorithm, can preserve solution positivity, mass conservation, and energy dissipation of the ANP equations at discrete level. % with the semi-implicit scheme, which has been proved to guarantee mass conservation, numerical positivity, and conditional energy dissipation in Section~\ref{s:TheAna}.
In this section, we perform a series of numerical experiments to study the effects of inhomogeneous dielectrics, ionic sizes, and correlations on charge dynamics. We use a uniform mesh with $N_x=N_y=200$ and a time step $\Delta t = 0.1\Delta x$. To understand various effects, we conduct systematic numerical tests on the following cases:
\begin{itemize}
	\item Case 1: $\varepsilon_m=\varepsilon_w=2$E$-4$;
	\item Case 2: $\varepsilon_m=\varepsilon_w=2$E$-4$ with steric effects;
	\item Case 3: $\varepsilon_m=2$E$-4$ and  $\varepsilon_w=1.56$E$-2$ with Coulomb correlations.
	\item Case 4: $\varepsilon_m=2$E$-4$ and  $\varepsilon_w=1.56$E$-2$ with both steric effects and Coulomb correlations.
\end{itemize}
Solving the generalized Debye-H\"uckel equation \eqref{gdh} numerically is rather challenging in high dimensions. We here ignore the second term in~\reff{muco} for simplicity when considering Coulomb correlations.
\begin{figure}[htbp]
	\centering
	\includegraphics[scale=0.48]{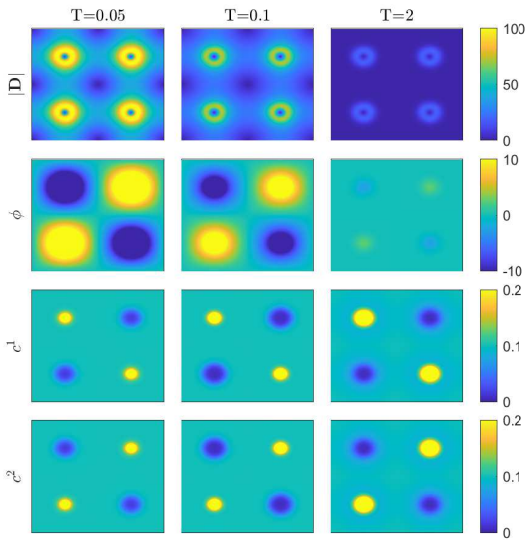}
	\caption{The snapshots of $|\bm{D}|$, $\phi$, $c^1$, and $c^2$ at time $T=0.05$, $T=0.1$, and $T=2$ for Case 1. }
	\label{f:vem=vew}
\end{figure}

In Case 1, we consider a spatially uniform dielectric coefficient with $\varepsilon_m= \varepsilon_w =2$E$-4$. Figure~\ref{f:vem=vew} shows the snapshots of the electric displacement $|\bm{D}|$, potential $\phi$, and concentrations at time $T=0.05$, $T=0.1$, and $T=2$. One can see that the concentration of ions accumulates at the permanent charges of opposite signs, due to the electrostatic attractions. Meanwhile, the magnitude of the electric displacement first develops ring peaks close to the four permanent charges, and later gets screened quickly by the mobile ions as time evolves.   Such results on the evolution of $\phi$, $c^{1}$, and $c^{2}$ agree well with previously reported results that are calculated with the PNP equations~\cite{LiuChun2020positivity}. To further understand ion dynamics, Figure~\ref{f:slices} (a) displays the evolution snapshots of cation concentrations at a cross section with $y=-0.5$. Clearly, the cations are attracted by electrostatic interactions to the negative fixed charges. Also, it is interesting to find that the distribution of cations is radially non-monotonic in equilibrium, due to the charge conservation in the system.

%Figure~\ref{f:vem=vew} (Right) also presents the evolution of total mass, discrete energy $\mathcal{F}_h$, and the minimum concentration values $C_{\rm Min}$ on the computational grid. Clearly, the total mass of concentrations is perfectly conserved during the time evolution.  Being consistent with the model, the discrete energy $\mathcal{F}_h$ decreases monotonically. Of much interest is the evolution of the minimum concentration, which is zoomed in the inset plot. It illustrates that the numerical solution of both concentrations remain positive all the time. %Such numerical results verify that the proposed semi-implicit numerical scheme robustly guarantees mass conservation, positivity, and energy dissipation at discrete level.
%
%From figure \ref{prop}, our proposed numerical method respects mass conservation law, energy dissipation, and positivity at discrete level.

%\begin{figure}[ht]
%	\centering
%   \includegraphics[scale=0.42]{nocorh_epsl.eps} \hspace{-2mm}
%	\includegraphics[scale=0.4]{nocorh_epslp.eps}
%	\caption{Left: The snapshots of $|\bm{D}|$, $c^1$, and $c^2$ at time $T=0.05$, $T=0.1$, and $T=2$ with $\varepsilon_m=2$E$-4$ and  $\varepsilon_w=1.56$E$-2$. Right: The evolution of total mass, discrete energy $\mathcal{F}_h$, and the minimum concentration on the computational grid. }
%  \label{f:vem!=vew}
%\end{figure}

\begin{figure}[ht]
	\centering
	\includegraphics[scale=0.41]{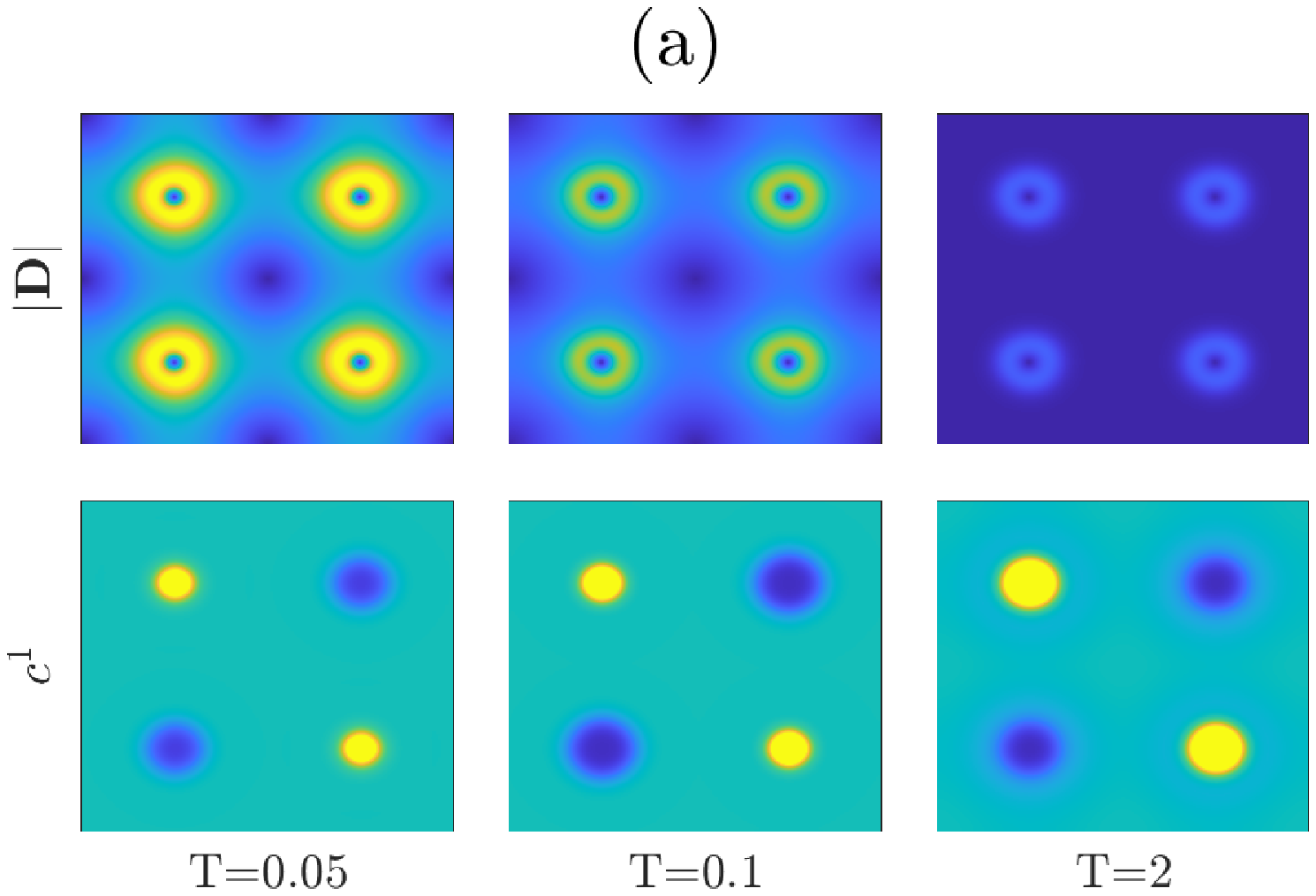} \hspace{-3mm}
	\includegraphics[scale=0.41]{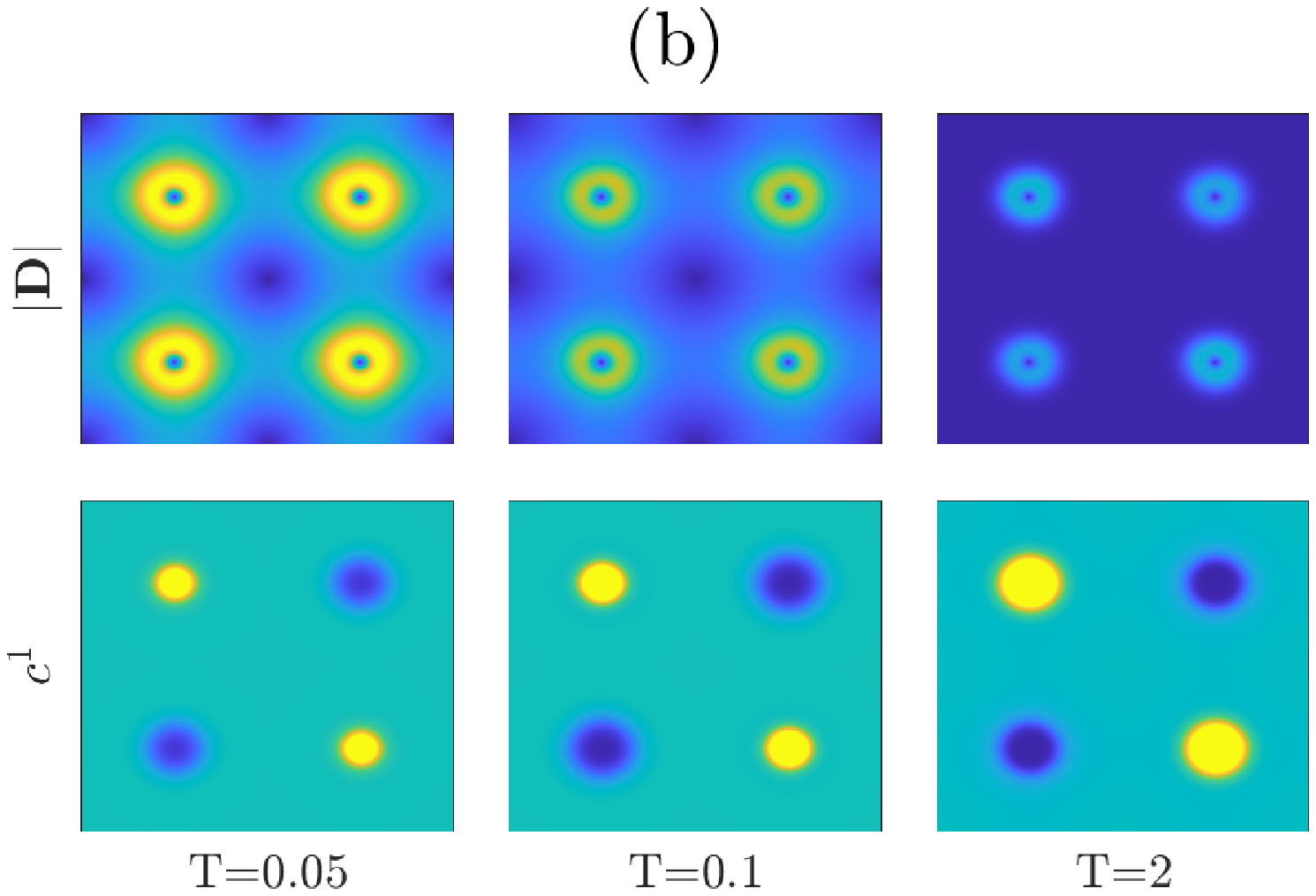}
	\includegraphics[scale=0.41]{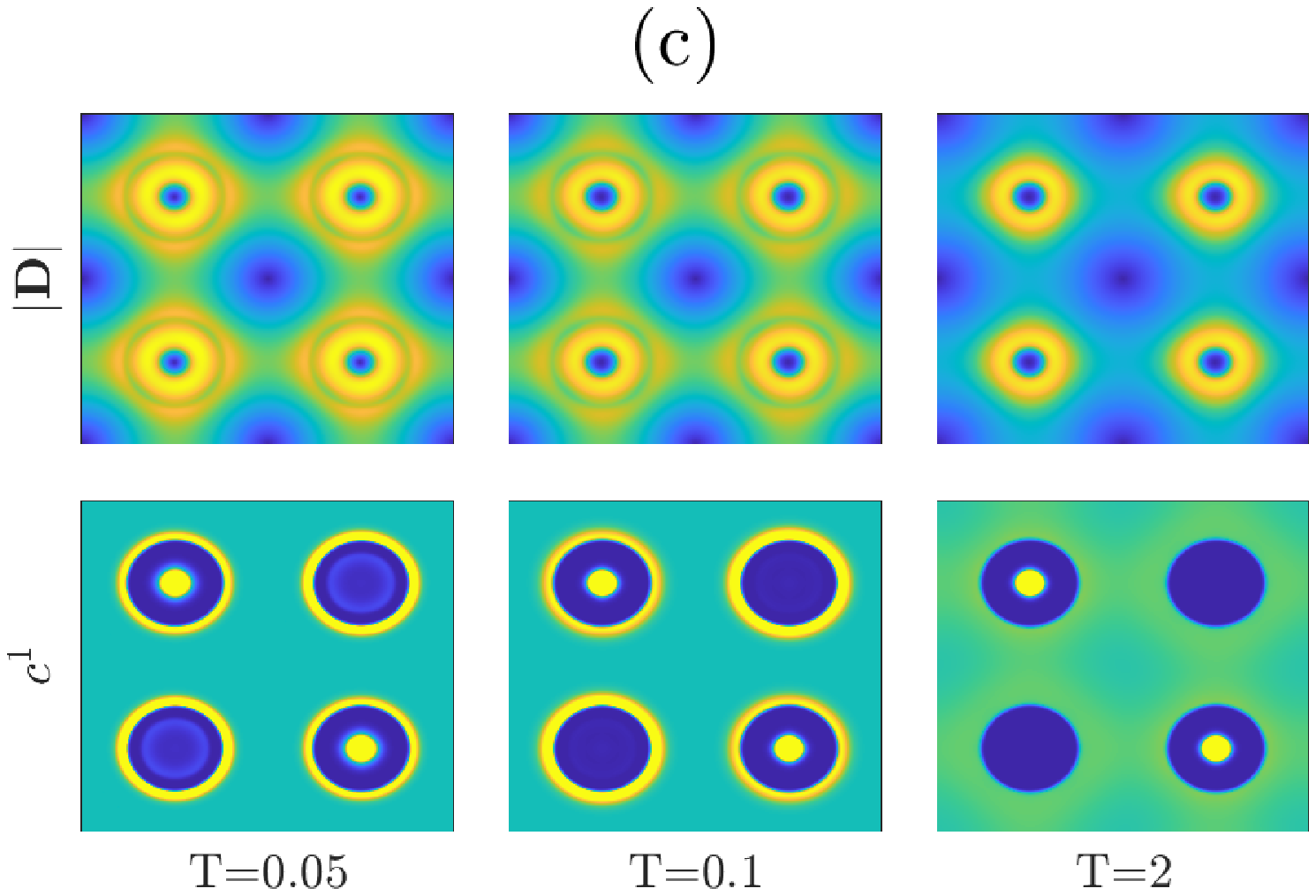} \hspace{-3mm}
	\includegraphics[scale=0.41]{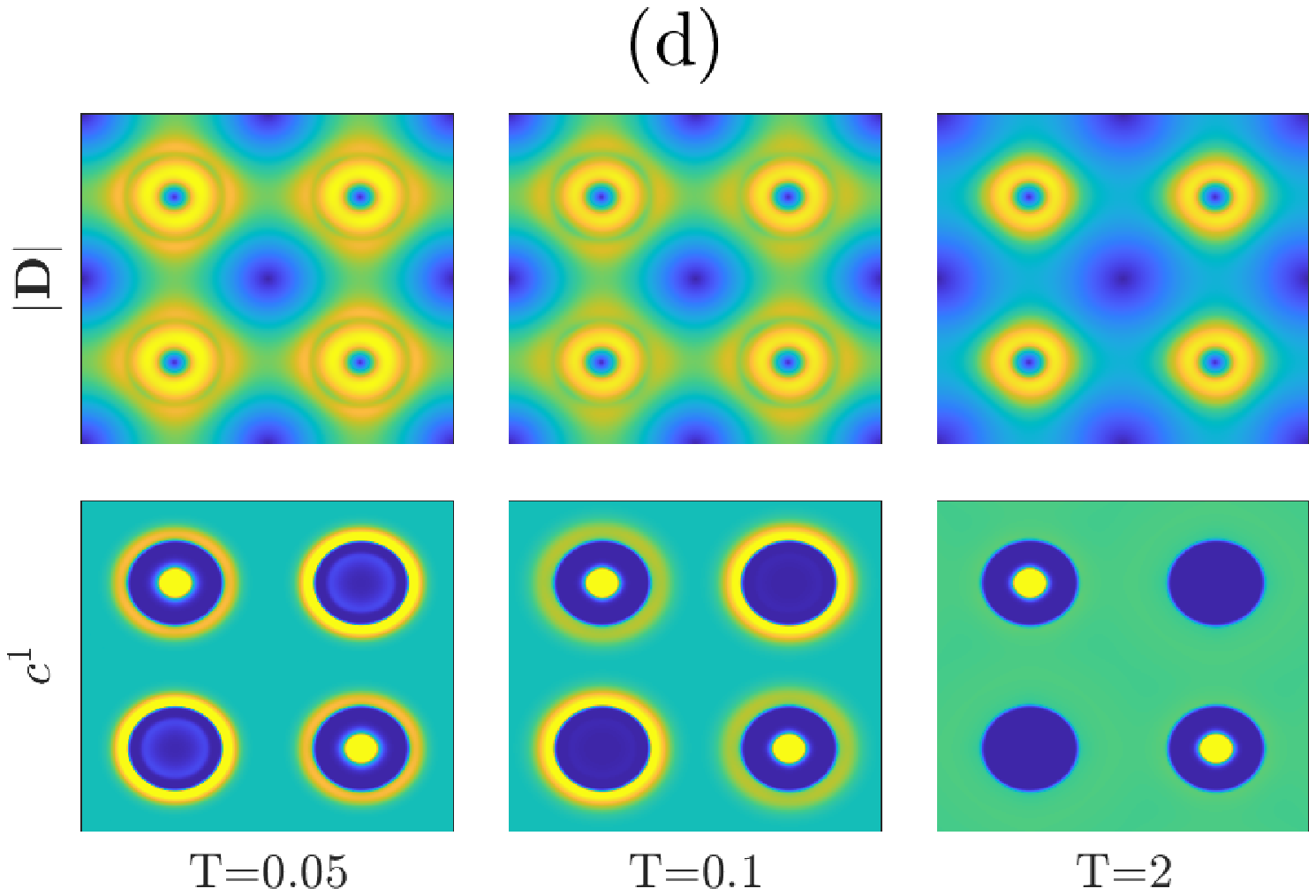}
	\caption{The snapshots of $|\bm{D}|$ and $c^1$ at time $T=0.05$, $T=0.1$, and $T=2$. Panel (a)-(d) correspond to Case 1 - Case 4, respectively. The scales of colorbars are the same as Figure~\ref{f:vem=vew}. }
	\label{f:vem!=vew}
\end{figure}

\begin{figure}[htbp]
	\centering
	\includegraphics[scale=0.43]{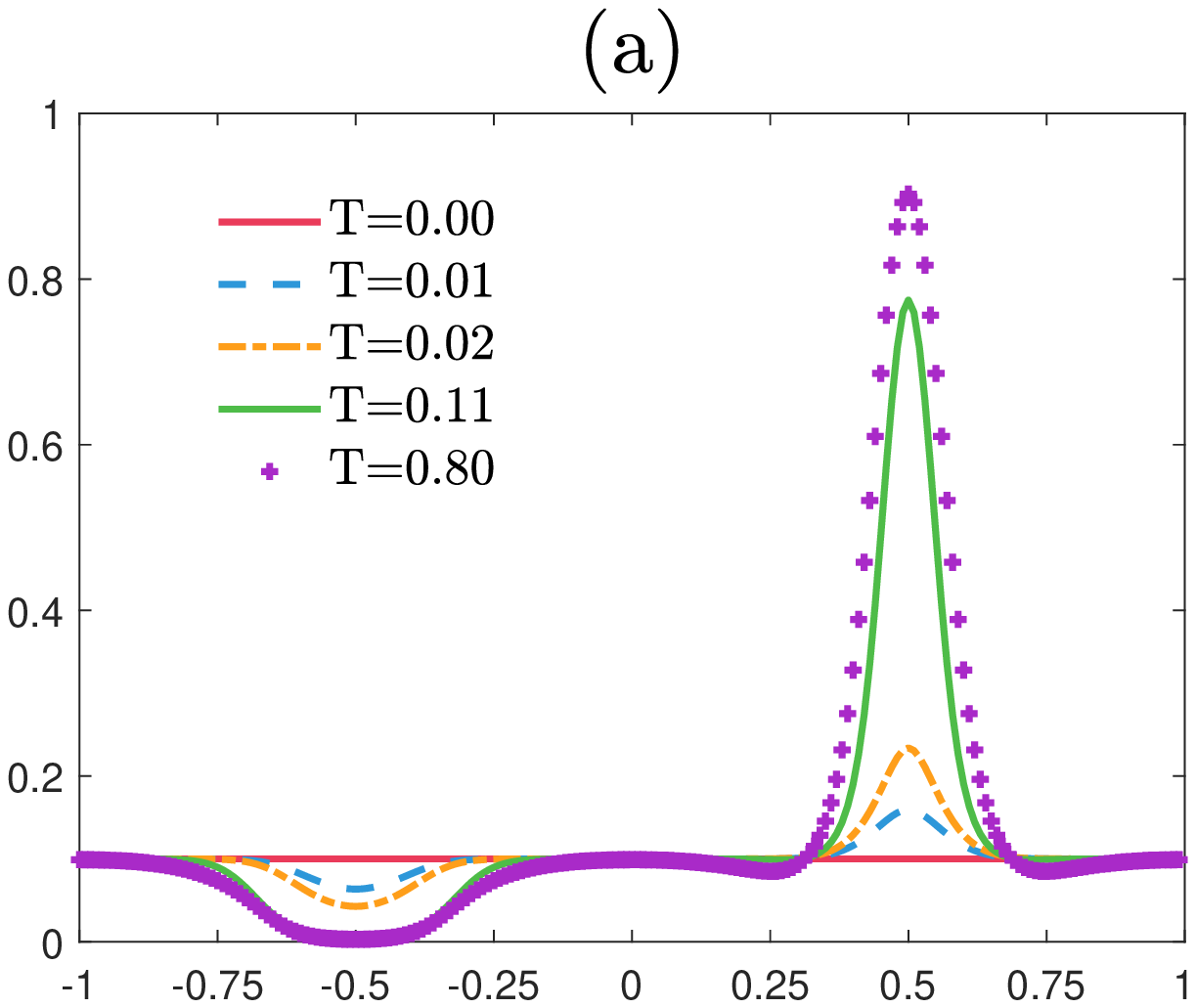} \hspace{-5mm}
	\includegraphics[scale=0.43]{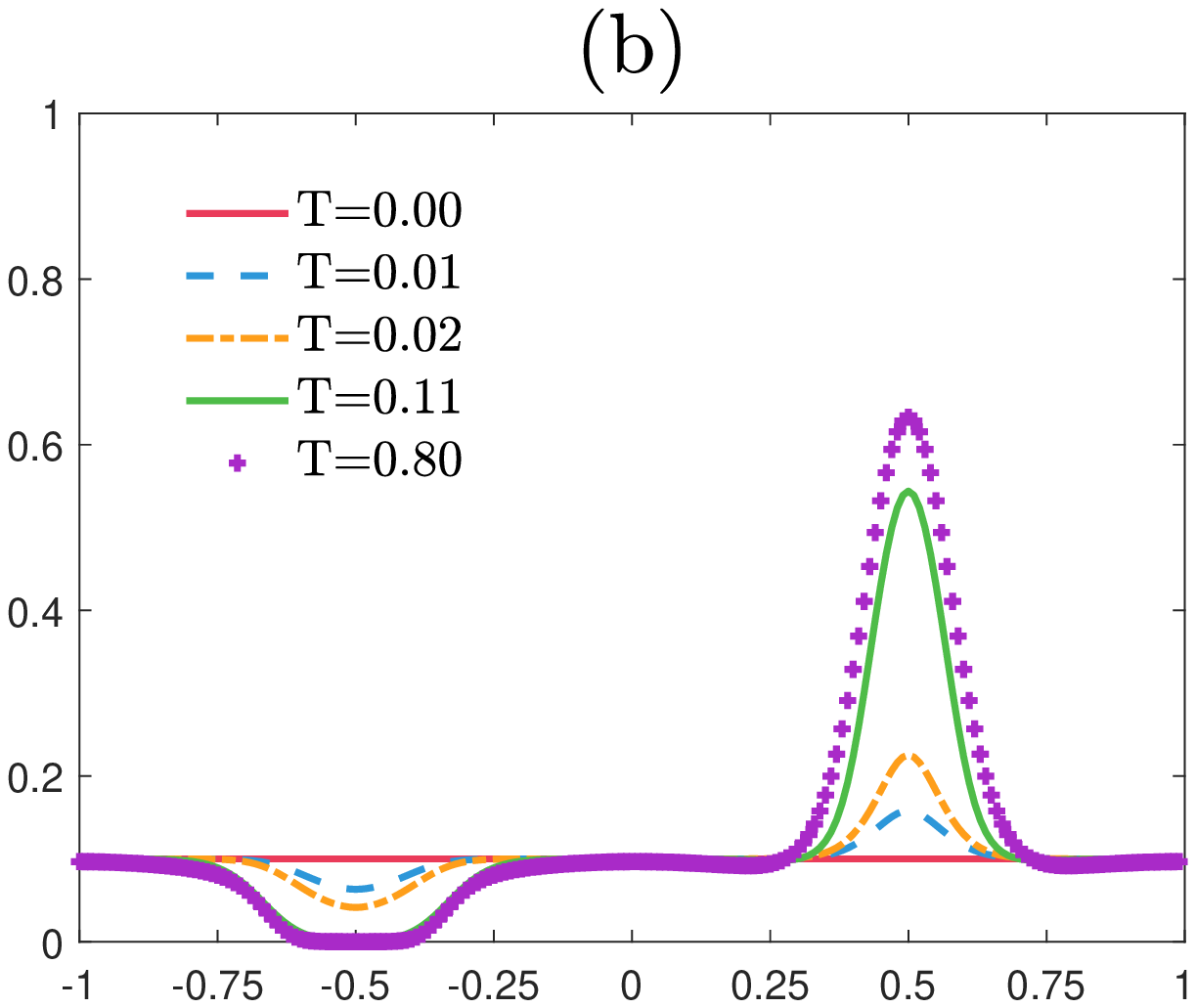}
	\includegraphics[scale=0.43]{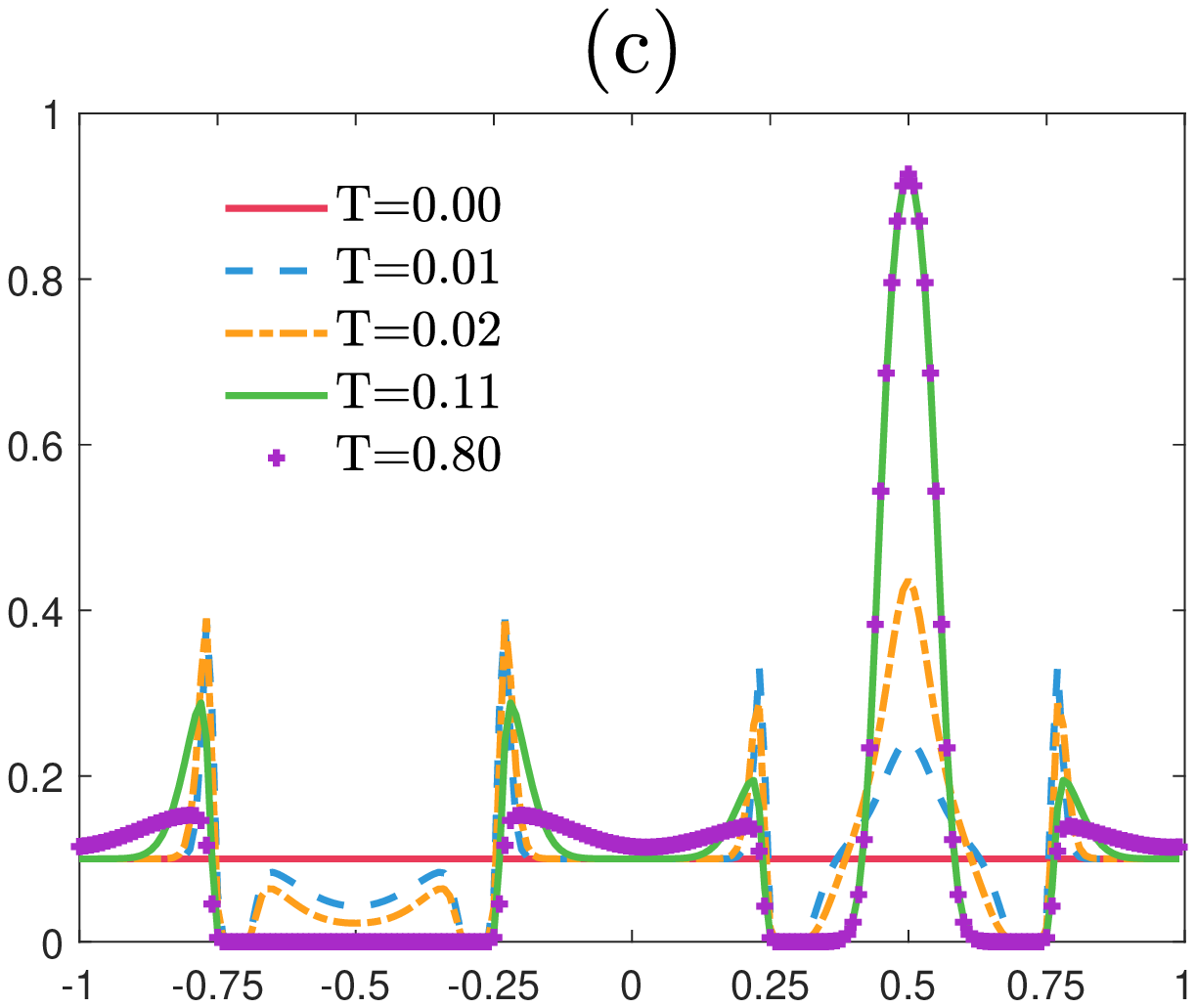}\hspace{-5mm}
	\includegraphics[scale=0.43]{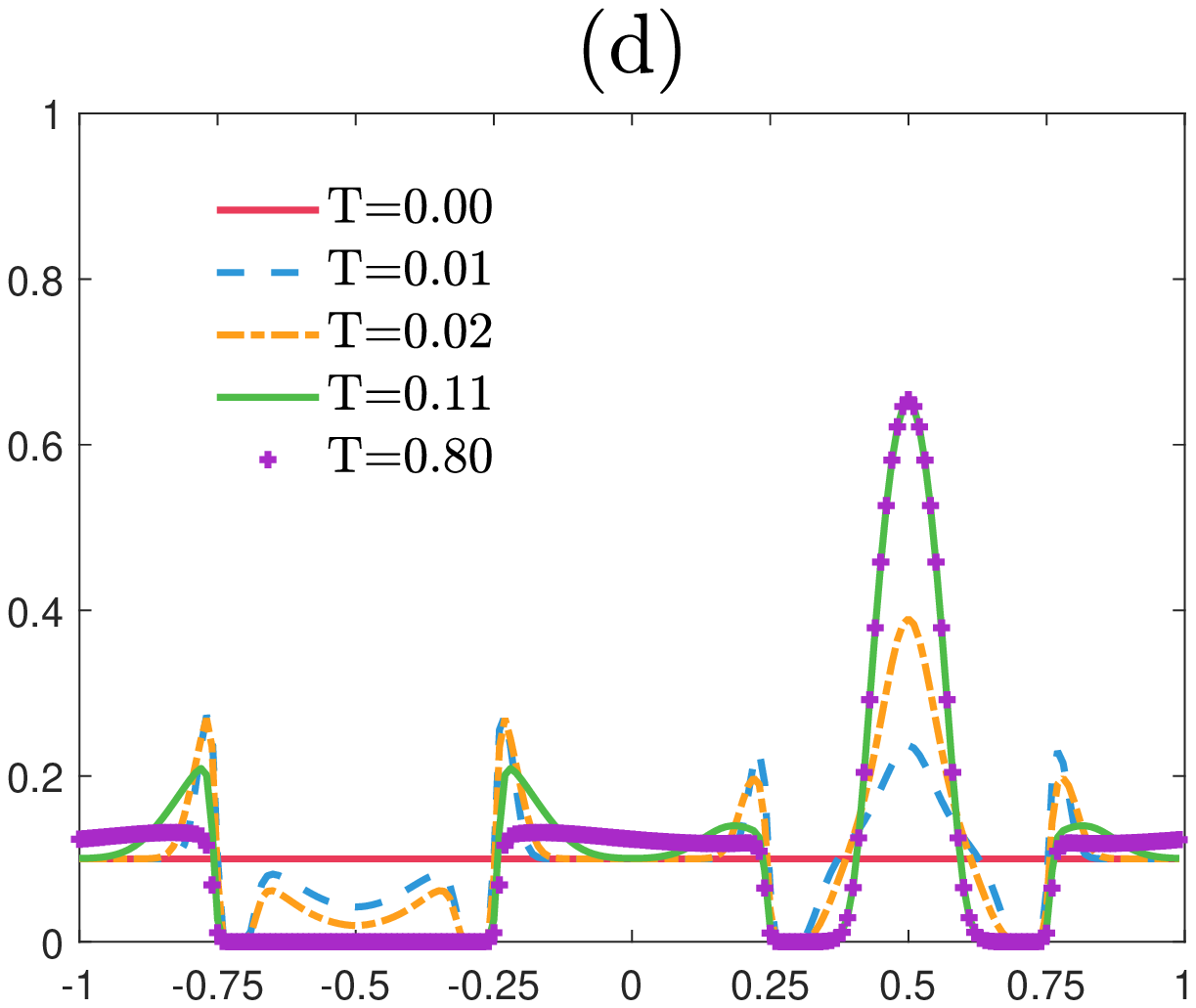}
	\caption{The snapshots of cation concentrations for Case 1 - Case 4  at a cross section with $y=-0.5$.}
	\label{f:slices}
\end{figure}

In Case 2, we still study a uniform dielectric coefficient but with additional ionic steric effects. The results are presented in Figure~\ref{f:vem!=vew} (b) and Figure~\ref{f:slices} (b). Comparing with the results of Case 1, one finds that the counterion concentration peaks at fixed charges lower due to the steric repulsion between ions. With less accumulated ions, the electric displacement is less screened and therefore its magnitude is much stronger than Case 1.

To further assess the Coulomb correlation effects,  we perform numerical simulations with variable dielectric coefficients and Coulomb correlations, as shown in Case 3 and 4. In contrast to previous cases, the dielectric coefficient away from the four permanent charges is $78$ times larger, enhancing the local screening effects, to mimic the big dielectric variation across sharp transition interfaces. The results are displayed in Figure~\ref{f:vem!=vew} (c) - (d) and Figure~\ref{f:slices} (c) - (d). A salient difference between the plots in upper and lower panels is the emergence of the ring structures in both electric displacement and concentration distributions across the sharp dielectric variations. Further detailed comparisons based on Figure~\ref{f:slices} reveal that the Coulomb correlation drives ions to regions with higher dielectric coefficients, forming a marked depletion ring region. The cation concentration has an abrupt transition at the region where the dielectric coefficient has a sharp variation. The concentration snapshots at $T=0.01$ and $T=0.02$ also demonstrate that the Coulomb correlation comes into effect earlier than electrostatic interactions in ion dynamics. Moreover, it is observed from the snapshots that the system relaxes to the steady state faster with the inclusion of Coulomb correlations. Comparing the plots in left and right panels in Figure~\ref{f:slices}, one again can observe that the ionic steric effect lowers concentration peaks by ionic steric repulsions. Overall, the exhibited rich ion dynamics results from the competition among electrostatic interactions, steric effects, and Coulomb correlations.  The magnitude of the electric displacement is relatively stronger with dielectric inhomogeneity. This can be ascribed to the less screening arising from the partly expelled counterions by the additional steric effects and Coulomb correlations, especially in the regions where counterions are almost depleted. This accounts for the eminent ring structures both for the electric displacement and concentration distributions.
\begin{figure}[ht]
	\centering
	\includegraphics[scale=0.5]{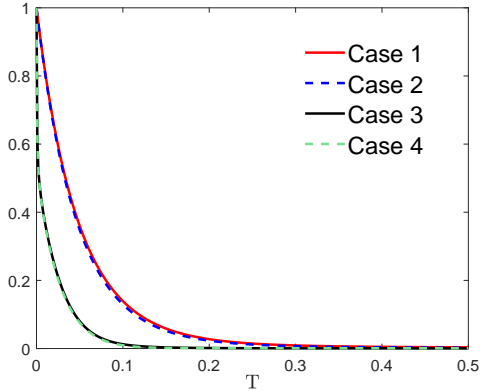}
	\caption{The evolution of normalized discrete energies $\mathcal{F}_h$ for Case 1 to Case 4.}
	\label{f:energy}
\end{figure}
\begin{table}[H]
	\centering
	\begin{tabular}{cccc}
		\hline \hline
		Case 1  & Case 2  & Case 3  & Case 4    \\
		\hline
		1.8139E-3  &  9.5848E-6 & 2.8764E-18 & 1.5079E-24     \\
		\hline \hline
	\end{tabular}
	\caption{The minimum concentration for both cations and anions on the computational grid for the whole time.}
	\label{table-cmin}	
\end{table}
To further understand the impact of various effects on ion dynamics, we investigate the evolution of discrete energies $\mathcal{F}_h$ for Case 1 to Case 4. Figure~\ref{f:energy} shows the discrete energies which are normalized by the difference between the initial and steady-state energies.  From the curves, one can see that the free energy decreases monotonically, as proved in Theorem~\ref{dE/dt}. Also, it is easy to find that the steric effects have minor impact on the evolution of the free energy. Nonetheless, the Coulomb correlations and dielectric inhomogeneity have significant impact on ion dynamics. Being consistent with Figure~\ref{f:slices}, the system reaches the steady state much faster with additional Coulomb correlations. In addition, we also pay attention to the positivity of the numerical solutions to ionic concentrations. Table~\ref{table-cmin} lists the minimum ionic concentration values on the computational grid for the whole time. One can observe that the numerical concentrations remain positive all the time, and the cases with Coulomb correlations and dielectric inhomogeneity have extremely small minimum values due to strong convection caused by the Coulomb correlations and dielectric inhomogeneity.

\section{Conclusions}\label{s:Con}
In this work, we have proposed an MANP framework, based on the ionic concentrations and electric displacement, for the description of charge dynamics. It has been shown that the MANP model {  with a curl-free condition on the electric displacement} is equivalent to the PNP theory. The MANP formulation {is proved to be energy dissipative with respect to} a convex energy functional.
%of both ionic concentrations and electric displacement.
According to the energy dissipation law, the steady state of the MANP formulation has been shown to be the charge conserving PB theory~\cite{Lee_JMP2014,Lee_NonL2010, Wan_PRX2014}. Thus, the MANP formulation offers an energy stable approach to investigate the PB-type theories analytically and numerically. {  To efficiently achieve the curl-free condition, the newly derived MANP formulation has been equipped with a companion local curl-free relaxation algorithm. It has been shown that the algorithm naturally preserves the discrete Gauss's law, has robust convergence, and is of linear computational complexity.} Our MANP formulation also provides a versatile modeling framework to derive models for charge dynamics. Modified MANP models beyond the mean-field theory have been presented to describe cases in which ionic steric effects and Coulomb correlations are not negligible. Results on the ion dynamics with steric effects, Coulomb correlations, and dielectric inhomogeneity have been presented to demonstrate the performance of the proposed MANP model.

It is expected that, with further refinement on the description of ionic interaction details, the proposed MANP model with the local curl-free relaxation algorithm can be extended to the efficient coarse-grained simulations of charge dynamics with fluctuating permanent charges~\cite{TorreEspanol_JCP2011}. {  Also, the MANP model equipped with the local curl-free relaxation algorithm exhibits promising potential in practical applications, especially for the cases involving the variable-coefficient Poisson's equation.} For instance, a Vlasov--Amp\`{e}re model, instead of a  Vlasov--Poisson model, has been applied to particle-in-cell simulations of plasma~\cite{chen2011JCP:VA}. However, only one-dimension simulations have been performed in the work. The proposed MANP model with the local curl-free relaxation algorithm is able to handle high dimensional cases.

\section*{Acknowledgements}

The authors would like to thank the anonymous reviewers for their constructive comments and suggestions, which have helped us greatly improve this work. This work is supported by the CAS AMSS-PolyU Joint Laboratory of Applied Mathematics. Z. Qiao's work is partially supported by the Hong Kong Research Grants Council (RFS grant No. RFS2021-5S03 and GRF grants Nos. 15302919 and 15303121). The work of Z. Xu and Q. Yin was partially supported by  NSFC (grant No. 12071288), Shanghai Science and Technology Commission  (grant Nos. 20JC1414100 and 21JC1403700), the Strategic Priority Research Program of CAS (grant No. XDA25010403) and the HPC center of Shanghai Jiao Tong University. S. Zhou's work is partially supported by the National Natural Science Foundation of China 12171319, Natural Science Foundation of Jiangsu Province (BK20200098), China, and Shanghai Science and Technology Commission (21JC1403700).

\bibliographystyle{plain}
\bibliography{groupbib}

\end{document}